\newtheorem{lemma}{Lemma}[section]
\title{
On problematic case of product approximation in Backus average
}
\author{{Filip P. Adamus}
\footnote{
Department of Earth Sciences, Memorial University of Newfoundland, Canada, {\tt adamusfp@gmail.com}}
}
\date{}
\begin{document}
\maketitle
\begin{abstract}
Elastic anisotropy might be a combined effect of the intrinsic anisotropy and the anisotropy induced by thin-layering.
The Backus average, a useful mathematical tool, allows us to describe such an effect quantitatively. 
The results are meaningful only if the underlying physical assumptions are obeyed, such as static equilibrium of the material.
We focus on the only mathematical assumption of the Backus average, namely, product approximation.
It states that the average of the product of a varying function with nearly-constant function is approximately equal to the product of the averages of those functions.
We discuss particular, problematic case for which the aforementioned assumption is inaccurate.
Further, we focus on the seismological context.
We examine numerically if the inaccuracy affects the wave propagation in a homogenous medium---obtained using the Backus average---equivalent to thin layers.
We take into consideration various material symmetries, including orthotropic, cubic, and others.
We show that the problematic case of product approximation is strictly related to the negative Poisson's ratio of constituent layers. 
Therefore, we discuss the laboratory and well-log cases in which such a ratio has been noticed.
Upon thorough literature review, it occurs that examples of so-called auxetic materials (media that have negative Poisson's ratio) are not extremely rare exceptions as thought previously. 
The investigation and derivation of Poisson's ratio for materials exhibiting symmetry classes up to monoclinic become a significant part of this paper.
Except for the main objectives, we also show that the averaging of cubic layers results in an equivalent medium with tetragonal (not cubic) symmetry.
Additionally, we present concise formulations of stability conditions for low symmetry classes, such as trigonal, orthotropic, and monoclinic.
\newline
{\bf{Keywords:}} Backus averaging, Continuum mechanics, Approximation, Poisson's ratio, Numerical analysis.
\end{abstract}
\section{Introduction}
The assumption of material isotropy is convenient but often inaccurate.
For instance, in the context of the elasticity of rocks, individual crystals have to be neither of the same types nor oriented randomly.
In case they are not, we encounter so-called intrinsic anisotropy.
Further, due to geological processes, the formation of rocks can be arranged in a non-random manner forming a foliated structure.
In such a situation, we consider anisotropy induced by thin layers.

The Backus average is a useful mathematical tool that provides us with a quantitative description of the anisotropy produced by thin layering~\citep{Backus}.
The isotropic layers can be replaced by the transversely-isotropic, equivalent (or, so-called, effective, or replacement) medium.
The anisotropy of such medium is a consequence of the inhomogeneity of the stack of layers only~\citep[e.g.,][Chapter 4]{SlawinskiGreen}. 
Further, as also shown by~\citet{Backus}, the transversely-isotropic constituents may be approximated by a transversely-isotropic medium, which anisotropy is a combined effect of the intrinsic anisotropy and the anisotropy induced by thin-layering~\citep{Bakulinetal}.
The Backus average can be extended to lower symmetry classes. 
We can either follow the procedure analogous to the one shown by~\citet{Backus} or use the efficient matrix formalism presented by~\citet{SchMuir}.

The equivalent medium obtained using the Backus average is a good analogy of a layered material only if the underlying assumptions of the average are satisfied.
In the literature, numerous authors dedicate their works to the assumption of the material's static equilibrium.
Among many of them are~\citet{Helbig84}, \citet{Carcione}, or~\citet{LinerFei}.
Another, but mathematical assumption introduced by~\citet{Backus} is the one of product approximation, which states that the average of the product of a rapidly-varying function with nearly-constant function is approximately equal to the product of the averages of those functions.
For more than a half-century, the researchers take the product assumption for granted.
\citet{Bosetal} are the first authors to discuss its validity in the context of the Backus average.
A year later,~\citet{BosProductApprox} find and examine statistically a particular case for which the product approximation results in spurious values.
They conclude that this problematic case is physically possible, but not likely to appear in seismology.
The aforementioned authors examine a single example of a rapidly-varying function that corresponds to the isotropic layers only.

This paper aims to continue the investigation on the particular, problematic case of product approximation.
However, we do not limit ourselves to the examples of rapidly-varying functions corresponding to isotropic layers, but we also check their analogous forms valid for anisotropic constituents.
We discuss in detail the possibility of the occurrence of inaccurate product approximation in the context of seismology. 
We relate it to the presence of negative Poisson's ratio in individual thin layers.
Rocks that exhibit such a ratio are called auxetic; in this work, we pay special attention to them.
Finally, we perform several simulations of a wave propagating in thinly-layered and equivalent media.
We compare the results to understand what is the practical influence of the problematic case of product assumption on the accuracy of the averaging process.

To be able to perform the investigation on product approximation and negative Poisson's ratio, first, we need to introduce the necessary tools and notions that we use later in the text.
Therefore, in Section~\ref{sec:background}, we discuss symmetry classes of elasticity tensors, the conditions that must be obeyed to make these tensors stable, and details of the Backus average.
Section~\ref{sec3} consists of the main body of the paper.

Can we, in any seismological scenarios, freely use the Backus average to approximate thinly layered material by the long-wave equivalent medium?
The above question has to be posed and answered, hence this paper.
\section{Theory}\label{sec:background}
\subsection{Symmetry classes of elasticity tensor}\label{sec:symmetries}
In the theory of linear elasticity, the forces applied to a single point are expressed in terms of a stress tensor and their resultant deformations in terms of a strain tensor.
The definition of the strain tensor for infinitesimal displacements in three dimensions is
\begin{equation}\label{strain}
\varepsilon_{ij} :=\frac{1}{2}\left(\frac{\partial u_i}{\partial x_j}+\frac{\partial u_j}{\partial x_i}\right)\,
\qquad i, j \in\{1,2,3\}\,,
\end{equation} 
where subscripts $i$ and $j$\,, denote Cartesian coordinates, and $u_i$ are the components of the displacement vector describing the deformations in the $i$-th direction.
The constitutive equation relating stresses and strains is Hooke's law, namely,
\begin{equation}\label{hook}
\sigma_{ij}=\sum_{k=1}^{3}\sum_{\ell=1}^{3}c_{ijk\ell}\varepsilon_{k\ell}\,
\qquad i, j \in\{1,2,3\}\,,
\end{equation}
which states that the applied load at a point is linearly related to the deformation by the elasticity tensor, $c_{ijk\ell}$\,.
Due to the index symmetries of $c_{ijk\ell}$\,, we can replace it by $C_{mn}\,$, where $m,n\in\{1,...,6\}$\,, by following 
\begin{equation} 
\begin{cases} 
m=i\qquad {\rm if}\,\, i=j \\
n=\ell\qquad {\rm if}\,\, \ell=k
\end{cases}
\,\,\,{\rm and}\quad
\begin{cases} 
m=9-(i+j) \qquad {\rm if}\,\, i\neq j\\
n=9-(\ell+k) \qquad {\rm if}\,\, \ell\neq k
\end{cases}
.
\end{equation}
In this way, we can represent the elasticity tensor by a $6\times6$ matrix.
$C_{mn}$ can be invariant to different groups of transformations of the coordinate system.
The invariance to the orientation of the coordinate system is called material symmetry.
There are eight possible symmetry classes.
Herein, we focus on monoclinic, orthotropic, tetragonal, trigonal, transversely-isotropic (TI), cubic, and isotropic classes.

We call a tensor to be monoclinic if its symmetry group contains a reflection about a plane through the origin.
Herein, for convenience, we choose $x_3$ to be the axis along which we perform the reflection.
If we additionally rotate the coordinates by angle $\theta$ about the $x_3$-axis, where $\tan(2\theta)=2C_{45}/(C_{44}-C_{55})\,$, we can express the monoclinic tensor in its natural coordinate system~\citep[p.83]{Helbig}.
In such an orientation, the elasticity matrix has the lowest possible number of the nonzero entries~\citep[Section 5.6.3]{SlawinskiRed}.
We obtain the following stress-strain relation expressed in a matrix form,
\begin{equation}\label{stressstrain}
\left[
\begin{array}{c}
\sigma_{11}\\
\sigma_{22}\\
\sigma_{33}\\
\sigma_{23}\\
\sigma_{13}\\
\sigma_{12}\\
\end{array}
\right]
=
\left[
\begin{array}{cccccc}
C_{11}&C_{12} & C_{13} & 0 & 0 & C_{16}  \\
C_{12}&C_{22} &C_{23}& 0 & 0 & C_{26}  \\
C_{13}&C_{23}& C_{33} & 0 & 0 & C_{36}  \\
0 & 0 & 0 & C_{44} & 0 & 0 \\
0 & 0 & 0 & 0 & C_{55} & 0\\
C_{16} & C_{26}  & C_{36}  & 0 & 0 & C_{66}
\end{array}
\right]
\left[
\begin{array}{c}
\varepsilon_{11}\\
\varepsilon_{22}\\
\varepsilon_{33}\\
2\varepsilon_{23}\\
2\varepsilon_{13}\\
2\varepsilon_{12}\\
\end{array}
\right]
\,.
\end{equation}
The elasticity tensor whose symmetry group contains a two-fold, three-fold, four-fold, or $n$-fold rotation is called orthogonal, trigonal, tetragonal, or TI, respectively.
Their matrix representations having the least nonzero independent entries are the following.
\begin{equation}
\bm{C}^{\rm ort}
=
\left[
\begin{array}{cccccc}
C_{11}&C_{12} & C_{13} & 0 & 0 & 0  \\
C_{12}&C_{22} &C_{23}& 0 & 0 & 0  \\
C_{13}&C_{23}& C_{33} & 0 & 0 & 0  \\
0 & 0 & 0 & C_{44} & 0 & 0 \\
0 & 0 & 0 & 0 & C_{55} & 0\\
0 & 0 & 0  & 0 & 0 & C_{66}
\end{array}
\right]
\,,
\end{equation}

\begin{equation}
\bm{C}^{\rm trig}
=
\left[
\begin{array}{cccccc}
C_{11}&C_{12} & C_{13} & 0 & C_{15}  & 0  \\
C_{12}&C_{11} &C_{13}& 0 & -C_{15}  & 0  \\
C_{13}&C_{13}& C_{33} & 0 & 0 & 0  \\
0 & 0 & 0 & C_{44} & 0 & -C_{15}  \\
C_{15}& -C_{15}  & 0 & 0 & C_{44} & 0\\
0 & 0  & 0  & -C_{15} & 0 & \tfrac{C_{11}-C_{12}}{2}
\end{array}
\right],
\end{equation}

\begin{equation}
\bm{C}^{\rm tetr}
=
\left[
\begin{array}{cccccc}
C_{11}&C_{12} & C_{13} & 0 & 0 & 0  \\
C_{12}&C_{11} &C_{13}& 0 & 0 & 0  \\
C_{13}&C_{13}& C_{33} & 0 & 0 & 0  \\
0 & 0 & 0 & C_{44} & 0 & 0 \\
0 & 0 & 0 & 0 & C_{44} & 0\\
0 & 0 & 0  & 0 & 0 & C_{66}
\end{array}
\right]\,,
\end{equation}

\begin{equation}
\bm{C}^{\rm TI}
=
\left[
\begin{array}{cccccc}
C_{11}&C_{12} & C_{13} & 0 & 0 & 0  \\
C_{12}&C_{11} &C_{13}& 0 & 0 & 0  \\
C_{13}&C_{13}& C_{33} & 0 & 0 & 0  \\
0 & 0 & 0 & C_{44} & 0 & 0 \\
0 & 0 & 0 & 0 & C_{44} & 0\\
0 & 0 & 0  & 0 & 0 & \tfrac{C_{11}-C_{12}}{2}
\end{array}
\right].
\end{equation}

Again, we choose the $x_3$-axis to be the rotation axis. 
A cubic symmetry group contains four-fold rotations about two axes that are orthogonal to one another, whereas isotropic elasticity tensor is invariant under any rotation.
Their matrix representations are
\begin{equation}
\bm{C}^{\rm cub}
=
\left[
\begin{array}{cccccc}
C_{11}&C_{13} & C_{13} & 0 & 0 & 0  \\
C_{13}&C_{11} &C_{13}& 0 & 0 & 0  \\
C_{13}&C_{13}& C_{11} & 0 & 0 & 0  \\
0 & 0 & 0 & C_{44} & 0 & 0 \\
0 & 0 & 0 & 0 & C_{44} & 0\\
0 & 0 & 0  & 0 & 0 & C_{44}
\end{array}
\right]
\end{equation}
and
\begin{equation}
\bm{C}^{\rm iso}
=
\left[
\begin{array}{cccccc}
C_{11}&C_{11}-2C_{44} & C_{11}-2C_{44} & 0 & 0  & 0  \\
C_{11}-2C_{44}&C_{11} &C_{11}-2C_{44}& 0 & 0  & 0  \\
C_{11}-2C_{44}&C_{11}-2C_{44}& C_{11} & 0 & 0 & 0  \\
0 & 0 & 0 & C_{44} & 0 & 0  \\
0& 0  & 0 & 0 & C_{44} & 0\\
0 & 0  & 0  & 0 & 0 & C_{44}
\end{array}
\right].
\end{equation}
\subsection{Stability conditions for various symmetries}\label{sec:stability}
The stability conditions constitute the fact that it is necessary to expand energy to deform a material.
To satisfy these conditions, a $6\times6$ matrix that represents an elasticity tensor must be positive semi-definite. 
A real symmetric matrix is positive semi-definite if and only if all its eigenvalues (or, equivalently, its principal minors) are nonnegative.
Any isotropic tensor is stable if 
\begin{equation}
C_{11}\geq\frac{4}{3}C_{44}\geq0\,.
\end{equation}
A cubic tensor must satisfy
\begin{equation}
C_{11}-C_{13}\geq0\,,\quad C_{11}+2C_{13}\geq0\,,\quad {\rm and}\quad C_{44}\geq0\,.
\end{equation}
For a TI and tetragonal tensor we require
\begin{equation}
C_{11}-|C_{12}|\geq0\,,\quad C_{33}(C_{11}+C_{12})\geq 2C_{13}^2\,,\quad C_{44}\geq0\,,\quad {\rm and}\quad C_{66}\geq0\,.
\end{equation}
The last inequality is redundant for the TI case, due to relation $2C_{66}=C_{11}-C_{12}$\,.
A trigonal tensor expressed in a natural coordinate system is stable if
\begin{equation}\label{stab:trig}
C_{11}-|C_{12}|\geq0\,,\quad C_{33}(C_{11}+C_{12})\geq 2C_{13}^2\,,\quad C_{44}\geq0\,,\quad {\rm and}\quad C_{11}-C_{12}\geq 2\frac{C_{15}^2}{C_{44}}\,.
\end{equation}
These inequalities are more complicated if a trigonal tensor is not expressed with respect to its natural coordinate system.
In such a case, not analysed herein, $C_{14}\neq0\,$.
So far we have obtained the above stability conditions by verifying the requirements for nonnegative eigenvalues. 
In the case of orthotropic and monoclinic symmetry classes, due to complicated forms of eigenvalues, we follow the nonnegative, principal-minors criterium.
For the orthotropic tensor, we get
\begin{align}
C_{11}\geq0\,,\quad C_{11}C_{22}\geq C_{12}^2\,,\quad C_{44}\geq0\,,\quad C_{55}\geq0\,,\quad C_{66}\geq0\,,\quad {\rm and} \label{cond:ort0}
\\
C_{11}C_{22}C_{33}+2C_{12}C_{13}C_{23}-C_{11}C_{23}^2-C_{22}C_{13}^2-C_{33}C_{12}^2\geq0\,. \label{cond:ort}
\end{align}
A monoclinic tensor is stable if inequalities~(\ref{cond:ort0}), (\ref{cond:ort}), and
\begin{equation}
\begin{aligned}\label{cond:mon}
C_{11}C_{22}C_{33}+2C_{12}C_{13}C_{23}-C_{11}C_{23}^2-C_{22}C_{13}^2-C_{33}C_{12}^2\geq
C_{16}^2(C_{22}C_{33}-C_{23}^2)\\
+C_{26}^2(C_{11}C_{33}-C_{13}^2)+
C_{36}^2(C_{11}C_{22}-C_{12}^2)+
2C_{16}C_{26}(C_{13}C_{23}-C_{33}C_{12})\\
+2C_{16}C_{36}(C_{12}C_{23}-C_{22}C_{13})+
2C_{26}C_{36}(C_{12}C_{13}-C_{11}C_{23})
\end{aligned}
\end{equation}
are satisfied.   
The inequalities are even more complicated for a non-natural coordinate system, where $C_{45}\neq0$\,.
We notice that for any symmetry class all the main-diagonal entries of the elasticity matrix must be nonnegative, which is simple to prove~\citep[e.g.][Exercise 4.5]{SlawinskiRed}.
Notice that the stability conditions for some of the symmetry classes are discussed in~\citet{Mouchat}.
\subsection{Backus average}
The procedure of Backus averaging is based on the assumption that the averaged medium is in static equilibrium. If the top and bottom of such a medium  is subjected to the same stresses, and we set the Cartesian coordinate system in such a manner that the $x_3$-axis is vertical, then
\begin{equation}\label{const}
\sigma_{i3}\,,\quad \frac{\partial u_i}{\partial x_2}\,,\quad \frac{\partial u_i}{\partial x_1}\,,\qquad i\in\{1,\,2,\,3\}\,
\end{equation}
are vertically constant.
The remaining stresses or strains may vary significantly along the $x_3$-axis.

Physically, the above assumption is satisfied, and the Backus average makes sense, if the thickness of the averaged stack of layers, $l'$\,, is much smaller than the wavelength.
In other words, the lower wave frequency, the better accuracy of the average. 
For purposes of our numerical tests, performed in Section~\ref{sec:num}, we choose $l'$ to be at least ten times shorter than the dominant wavelength of primary wave, $\lambda_0^P$\,, which assures that the long-wave assumption is satisfied~\citep{Carcione}.

Mathematically, the Backus average is correct if the only one mathematical assumption introduced by Backus, namely, the product approximation, remains true.
As Backus states in his paper,
\begin{equation}
{\overline{f(x_3)g(x_3)}} \approx {\overline{f(x_3)}}\,\,{\overline{g(x_3)}}\,,
\end{equation}
where, overbar denotes the average weighted by the layer thicknesses.
$f(x_3)$ is a nearly-constant function that stands for stresses and displacements from expression~(\ref{const}).
$g(x_3)$ describes combinations of elasticity parameters, which can vary significantly from layer to layer.
If the above approximation holds, the elasticity coefficients of a stack of isotropic layers are long-wave equivalent to
\begin{equation}\label{eq:backus}
\begin{gathered}
    C_{11}^{\overline{\rm TI}}=\overline{\left(\frac{C_{11}-2C_{44}}{C_{11}}\right)}^2\,\overline{\left(\frac{1}{C_{11}}\right)}^{-1}+\overline{\left(\frac{4(C_{11}-C_{44})C_{44}}{C_{11}}\right)}\,,\\
 C_{13}^{\overline{\rm TI}}=\overline{\left(\frac{C_{11}-2C_{44}}{C_{11}}\right)}\,\overline{\left(\frac{1}{C_{11}}\right)}^{-1}\,,\\
     C_{33}^{\overline{\rm TI}}=\overline{\left(\frac{1}{C_{11}}\right)}^{-1}\,,\\
     C_{44}^{\overline{\rm TI}}=\overline{\left(\frac{1}{C_{44}}\right)}^{-1}\,,\\
     C_{66}^{\overline{\rm TI}}=\overline{C_{44}}\,,
     \end{gathered}
 \end{equation}
where $C_{11}$ and $C_{44}$ describe each isotropic layer.
The five independent coefficients on the left-hand side are the equivalent transversely-isotropic parameters.
In Appendix~\ref{ch4:ap1}, we present formulation of the Backus average for layers that exhibit lower symmetry classes.
\section{Problematic case of product approximation}\label{sec3}
As discussed by~\citet{BosProductApprox}, the assumption of product approximation may be inaccurate only in the case of ${\overline{g}}\approx0$\,. 
Since, in such a situation, the relative error, 
\begin{equation}
err=\frac{{\overline{fg}}-{\overline{f}\overline{g}}}{{\overline{fg}}}\times100\%\,,
\end{equation}
is around $100\%\,$.
Predominantly, $g$ is positive~\citep{BosProductApprox}.
Therefore, in this section, we look for the possibilities of negative, or low positive $g$'s in layers so that the averaged medium have a chance to represent the problematic case of ${\overline{g}}\approx0$\,.
First, in Section~\ref{sec:th}, we study the problem from the theoretical point of view.
We analyse various examples of functions $g$ that describe combinations of elasticity coefficients corresponding to different symmetry classes.
Subsequently, in Section~\ref{sec:pois}, we look into the close relationship between Poisson's ratio and $g$\,.
Based on this relation, we discuss the possibility of occurrence of ${\overline{g}}\approx0$ in the real seismological cases.
Lastly, in Section~\ref{sec:num}, we choose theoretically and practically possible values of elasticity parameters for each layer, such that the resulting $\overline{g}\approx0$\,.
Based on numerical experiments, we compare the simulation of a wave propagating in the layered and long-wave equivalent medium.
\subsection{Negative $g$}\label{sec:th}
Let us examine to which combinations of elasticity parameters function $g$ corresponds.
Herein, we consider symmetry classes up to monoclinic.
To derive $g$\,, as an example, we perform the standard procedure to get Backus average for the monoclinic symmetry.
First, we write the stress-strain relations in such medium as
\begin{equation}\label{eq1}
\sigma_{11}=C_{11}\varepsilon_{11}+C_{12}\varepsilon_{22}+C_{13}\varepsilon_{33}+2C_{16}\varepsilon_{12}\,,
\end{equation}
\begin{equation}\label{eq2}
\sigma_{22}=C_{12}\varepsilon_{11}+C_{22}\varepsilon_{22}+C_{23}\varepsilon_{33}+2C_{26}\varepsilon_{12}\,,
\end{equation}
\begin{equation}\label{eq3}
\sigma_{33}=C_{13}\varepsilon_{11}+C_{23}\varepsilon_{22}+C_{33}\varepsilon_{33}+2C_{36}\varepsilon_{12}\,,
\end{equation}
\begin{equation}\label{eq4}
\sigma_{23}=C_{44}\frac{\partial{u_2}}{\partial{x_3}}+C_{44}\frac{\partial{u_3}}{\partial{x_2}}\,,
\end{equation}
\begin{equation}\label{eq5}
\sigma_{13}=C_{55}\frac{\partial{u_1}}{\partial{x_3}}+C_{55}\frac{\partial{u_3}}{\partial{x_1}}\,,
\end{equation}
\begin{equation}\label{eq6}
\sigma_{12}=C_{16}\varepsilon_{11}+C_{26}\varepsilon_{22}+C_{36}\varepsilon_{33}+2C_{66}\varepsilon_{12}\,.
\end{equation}
Then, we rewrite the above equations.
We want to have one component of a stress tensor or displacement vector that may vary along the $x_3$-axis on one side of the equations, whereas on the other side the components that are nearly constant.
We can directly do it with equations~(\ref{eq3})--(\ref{eq5}), namely,
\begin{equation}\label{eq7}
\varepsilon_{33}=\sigma_{33}\underbrace{ \left(\frac{1}{C_{33}}\right) }_{\text{$g_1$}}-\underbrace{\left(\frac{C_{13}}{C_{33}}\right)}_\text{$g_2$}\varepsilon_{11}-\underbrace{\left(\frac{C_{23}}{C_{33}}\right)}_\text{$g_3$}\varepsilon_{22}-\underbrace{\left(\frac{C_{36}}{C_{33}}\right)}_\text{$g_{m_1}$}2\varepsilon_{12}\,,
\end{equation}
\begin{equation}
\frac{\partial{u_2}}{\partial{x_3}}=\sigma_{23}\underbrace{\left(\frac{1}{C_{44}}\right)}_\text{$g_4$}-\frac{\partial{u_3}}{\partial{x_2}}\,,
\end{equation}
\begin{equation}
\frac{\partial{u_1}}{\partial{x_3}}=\sigma_{13}\underbrace{\left(\frac{1}{C_{55}}\right)}_\text{$g_5$}-\frac{\partial{u_3}}{\partial{x_1}}\,.
\end{equation}
Now, we insert the right-hand side of equation~(\ref{eq7}) into equations~(\ref{eq1}), (\ref{eq2}), and~(\ref{eq6}), so we get,
\begin{equation}
\sigma_{11}= \sigma_{33}\left(\frac{C_{13}}{C_{33}}\right)
+
\underbrace{\left(C_{11}-\frac{C_{13}^2}{C_{33}}\right)}_\text{$g_6$}\varepsilon_{11}
+
\underbrace{\left(C_{12}-\frac{C_{13}C_{23}}{C_{33}}\right)}_\text{$g_7$}\varepsilon_{22}
+
\underbrace{\left(C_{16}-\frac{C_{13}C_{36}}{C_{33}}\right)}_\text{$g_{m_2}$}2\varepsilon_{12}
\,,
\end{equation}
\begin{equation}
\sigma_{22}=
 \sigma_{33}\left(\frac{C_{23}}{C_{33}}\right)
+
\left(C_{12}-\frac{C_{13}C_{23}}{C_{33}}\right)\varepsilon_{11}
+
\underbrace{\left(C_{22}-\frac{C_{23}^2}{C_{33}}\right)}_\text{$g_8$}\varepsilon_{22}
+
\underbrace{\left(C_{26}-\frac{C_{23}C_{36}}{C_{33}}\right)}_\text{$g_{m_3}$}2\varepsilon_{12}
\,,
\end{equation}
\begin{equation}\label{eq12}
\sigma_{12}=
\sigma_{33}\left(\frac{C_{36}}{C_{33}}\right)
+
\left(C_{16}-\frac{C_{13}C_{36}}{C_{33}}\right)\varepsilon_{11}
+
\left(C_{26}-\frac{C_{23}C_{36}}{C_{33}}\right)\varepsilon_{22}
+
\underbrace{\left(C_{66}-\frac{C_{36}^2}{C_{33}}\right)}_\text{$g_9$}2\varepsilon_{12}
\,.
\end{equation}
Equations~(\ref{eq7})--(\ref{eq12}) are ready to be averaged.
However, to be able to proceed with the Backus average, from now on, we need to introduce the assumption of product approximation (see Appendix~\ref{ch4:ap1}).
Terms in parenthesis in equations~(\ref{eq7})--(\ref{eq12}) correspond to various $g$\,; we denote them as $g_i$ or $g_{m_i}\,$. 
Terms outside of parenthesis correspond to slowly varying function $f$\,.
Expressions $g_i$ are also presented in higher symmetry classes, and if we follow the procedure shown above, they occupy the analogical places as in equations~(\ref{eq7})--(\ref{eq12}).
On the other hand, $g_{m_i}$\,, are typical for a monoclinic symmetry class only; they do not have the analogical terms in higher symmetry classes.
As shown in Appendix~\ref{sec:appendix}, in trigonal symmetry there is a special case of $g$ that also does not find the analogy in other symmetries. 
We denote it by $g_{t}$\,.
In Table~\ref{tab:g}, we indicate all possibilities of $g$'s for seven symmetry classes.
\begin{table}[!htbp]
\renewcommand{\arraystretch}{1.2}
\scalebox{0.89}{%
\begin{tabular}
{ccccc}
\toprule
&monoclinic $(g^{\rm mon})$ & orthotropic $(g^{\rm ort})$& trigonal $(g^{\rm trig})$& tetragonal $(g^{\rm tetr})$\\
\cmidrule{1-5}
$g_1$& $1/C_{33}$&                                  $1/C_{33}$&        $1/C_{33}$&$1/C_{33}$\\
$g_2$& $C_{13}/C_{33}$&                         $C_{13}/C_{33}$&$C_{13}/C_{33}$&$C_{13}/C_{33}$\\
$g_3$& $C_{23}/C_{33}$&                         $C_{23}/C_{33}$&$g_2^{\rm trig}$&$g_2^{\rm tetr}$\\
$g_4$& $1/C_{44}$&                                  $1/C_{44}$&$1/C_{44}$&$1/C_{44}$\\
$g_5$& $1/C_{55}$&                                  $1/C_{55}$&$g_4^{\rm trig}$&$g_4^{\rm tetr}$\\
$g_6$& $C_{11}-C_{13}^2/C_{33}$&         $C_{11}-C_{13}^2/C_{33}$&   $C_{11}-C_{13}^2/C_{33}-C_{15}^2/C_{44}$&$C_{11}-C_{13}^2/C_{33}$\\
$g_7$& $C_{12}-C_{13}C_{23}/C_{33}$&  $C_{12}-C_{13}C_{23}/C_{33}$& $C_{12}-C_{13}^2/C_{33}+C_{15}^2/C_{44}$&$C_{12}-C_{13}^2/C_{33}$\\
$g_8$& $C_{22}-C_{23}^2/C_{33}$&         $C_{22}-C_{23}^2/C_{33}$&$g_6^{\rm trig}$&$g_6^{\rm tetr}$\\
$g_9$& $C_{66}-C_{36}^2/C_{33}$&         $C_{66}$&             $(C_{11}-C_{12})/2-C_{15}^2/C_{44}$&$C_{66}$\\
$g_{m_1}$& $C_{36}/C_{33}$&&&\\
$g_{m_2}$& $C_{16}-C_{13}C_{36}/C_{33}$& &&\\
$g_{m_3}$& $C_{26}-C_{23}C_{36}/C_{33}$& &&\\
$g_{t}$& &&$C_{15}/C_{44}$ &\\
\cmidrule{1-5}
 &TI $(g^{\rm TI})$& cubic $(g^{\rm cub})$& isotropic $(g^{\rm iso})$&\\
 \cmidrule{1-5}
 $g_1$& $1/C_{33}$&                                  $1/C_{11}$&        $1/C_{11}$&\\
$g_2=g_3$& $C_{13}/C_{33}$&                         $C_{13}/C_{11}$&$(C_{11}-2C_{44})/C_{11}$&\\
$g_4=g_5$& $1/C_{44}$&                                  $1/C_{44}$& $1/C_{44}$&\\
$g_6=g_8$& $C_{11}-C_{13}^2/C_{33}$&         $C_{11}-C_{13}^2/C_{11}$&   $4(C_{11}-C_{44})C_{44}/C_{11}$&\\
$g_7$& $C_{12}-C_{13}^2/C_{33}$&  $C_{13}-C_{13}^2/C_{11}$& $2(C_{11}-2C_{44})C_{44}/C_{11}$&\\
$g_9$& $(C_{11}-C_{12})/2$&         $C_{44}$& $C_{44}$&\\
\bottomrule
\end{tabular}
\caption{\small{Specific $g$'s for symmetry classes up to monoclinic}}
\label{tab:g}
}
\end{table}

Based on stability conditions and analysis performed below, in Table~\ref{tab:gpos}, we present for which $g$'s the negative values are allowed.
\begin{table}[!htbp]
\begin{tabular}
{ccccccc}
\toprule
monoclinic & orthotropic & trigonal& tetragonal & TI & cubic & isotropic\\
\cmidrule{1-7}
$g_2^{\rm mon}$&$g_2^{\rm ort}$&$g_2^{\rm trig}$&$g_2^{\rm tetr}$&$g_2^{\rm TI}$&$g_2^{\rm cub}$&$g_2^{\rm iso}$\\
$g_3^{\rm mon}$& $g_3^{\rm ort}$&$g_7^{\rm trig}$&$g_7^{\rm tetr}$&$g_7^{\rm TI}$& &$g_7^{\rm iso}$\\
$g_7^{\rm mon}$&$g_7^{\rm ort}$&$g_{t}$&&&&\\
$g_{m_1}$&&&&&&\\
$g_{m_2}$&&&&&&\\
$g_{m_3}$&&&&&&\\
\bottomrule
\end{tabular}
\caption{\small{Possibly negative $g$'s for symmetry classes up to monoclinic}}
\label{tab:gpos}
\end{table}
As can be easily verified numerically, the stability conditions allow $C_{13}$ and $C_{23}$ to be negative, thus, $g_2$ and $g_3$ are not necessarily positive.
Since it is required that $C_{ii}\geq 0$ (for $i\in\{1,...,6\}$)\,, we conclude that all $g_1$\,, $g_4$\,, $g_5$\,, and particular $g_9$ must be nonnegative.
Below, we analyse only the cases in which it is non-trivial to decide if $g$'s are allowed to be negative.
Since, the verdicts of possible negativity of $g$'s are obvious in cases of isotropic and cubic symmetries, let us discuss $g_6$ and $g_7$ for TI and tetragonal symmetries.
We invoke condition
\begin{equation}
C_{33}(C_{11}+C_{12})\geq2C_{13}^2\,.
\end{equation}
We know also that $C_{11}>C_{12}$\,.
From the both conditions we obtain
\begin{equation}
C_{33}C_{11}\geq C_{13}^2\,,
\end{equation}
and we infer that 
\begin{equation}
C_{33}C_{12}\geq C_{13}^2\,
\end{equation}
is not necessarily true, hence, $g_7^{\rm{TI}}$ and $g_7^{\rm{tetr}}$ may be negative, whereas $g_6^{\rm{TI}}$ and $g_6^{\rm{tetr}}$ are always nonnegative.
For trigonal symmetry, the situation is more complicated, due to parameter $C_{15}$\,.
$g_9^{\rm{trig}}$ is always nonnegative, due to condition 
\begin{equation}\label{cond:trig}
\frac{(C_{11}-C_{12})}{2}\geq\frac{C_{15}^2}{C_{44}}\,.
\end{equation}
Now we can analyse $g_6^{\rm{trig}}$\,.
We know that $C_{11}-C_{13}^2/C_{33}$ is nonnegative.
To make it negative we try to subtract something  greater or equal than $C_{15}^2/C_{44}$\,, which is $(C_{11}-C_{12})/2$\,.
We obtain
\begin{equation}
C_{11}-\frac{C_{13}^2}{C_{33}}-\frac{C_{11}-C_{12}}{2}=\frac{\frac{1}{2}C_{33}(C_{11}+C_{12})}{C_{33}}-\frac{C_{13}^2}{C_{33}}\geq0\,.
\end{equation}
Thus, $g_6^{\rm{trig}}$ must be nonnegative.
If $C_{15}=0$ then $g_7^{\rm{trig}}=g_7^{\rm{tetr}}=g_7^{\rm{TI}}$ and it means that $g_7^{\rm{trig}}$ can be negative the same as $g_7^{\rm{tetr}}$ and $g_7^{\rm{TI}}$ can. The additional stability condition~(\ref{cond:trig}) for trigonal symmetry---its other conditions are the same for TI and tetragonal symmetry---does allow it.
Also, we numerically check that $C_{15}/C_{44}$ can be negative; thus, $g_{t}$ may be negative as well.

Let us discuss the orthotropic and monoclinic case. 
Due to the complexity of inequalities~(\ref{cond:ort}) and~(\ref{cond:mon}), to decide whether particular $g$ are allowed to be negative, we perform numerical---instead of analytical---analysis only. 
Based on Monte Carlo (MC) simulations, we notice that $g_7^{\rm{ort}}$\,, $g_7^{\rm{mon}}$, $g_{m_1}\,$, $g_{m_2}\,$, or $g_{m_3}$ can be negative while the eigenvalues of the tensors are still positive.
We neither have found an orthotropic matrix with six nonnegative eigenvalues, where $g_6^{\rm{ort}}<0$ or $g_8^{\rm{ort}}<0$\,, nor monoclinic, semipositive matrix, where $g_6^{\rm{mon}}<0$\,, $g_8^{\rm{mon}}<0$\,, or $g_9^{\rm{mon}}<0$\,.
Thus, we conclude that the above $g$'s must be nonnegative and we do not include them in Table~\ref{tab:gpos}.
\subsection{Negative Poisson's ratio}\label{sec:pois}
\subsubsection{Relation between $g$ and Poisson's ratio}
In this section, we look for the alternative elastic moduli that may indicate negative $g$\,. 
Especially, we focus on the relationship between  $g<0$ and negative Poisson's ratio.
First, let us discuss the isotropic symmetry class.
To have more physical insight in possibly negative $g_2^{\rm iso}$ and $g_7^{\rm iso}$\,, we can express it in terms of Lam\'e parameters or bulk modulus and rigidity.
Knowing that $\lambda:=C_{11}-2C_{44}$ and $\mu:=C_{44}$\,, we rewrite
\begin{equation}\label{eq:iso}
g_2^{\rm iso}=\frac{\lambda}{\lambda+2\mu}=\frac{K-\frac{2}{3}\mu}{K+\frac{4}{3}\mu}\qquad {\rm and}\qquad 
g_7^{\rm iso}=\frac{2\lambda\mu}{\lambda+2\mu}
=
\frac{2(K-\frac{2}{3}\mu)\mu}{K+\frac{4}{3}\mu}
\,,
\end{equation}
where $K:=\lambda+(2/3)\mu$ denotes pure incompressibility and $\mu$ stands for sole rigidity. 
The material is stable if $\lambda\geq-(2/3)\mu$\,, $\mu\geq0$\,, and $K\geq0\,$.
Thus, the denominators of expression~(\ref{eq:iso}) must be positive. 
Therefore, $g^{\rm iso}_2$ and $g^{\rm iso}_7$ are negative if and only if
\begin{equation}
\lambda<0\qquad {\rm or}\qquad \frac{{\rm incompressibility}}{{\rm rigidity}}:=\frac{K}{\mu}<\frac{2}{3}\,.
\end{equation}
The magnitudes of $g^{\rm iso}_2$ and $g^{\rm iso}_7$ are incomparable, since $g^{\rm iso}_2$ is dimensionless, whereas $g^{\rm iso}_7$ is not.
We can express the Poisson's ratio, $\nu$\,, in terms of Lam\'e parameters, or primary and secondary waves, namely,
\begin{equation}\label{eq:iso2}
\nu_{31}:=-\frac{\varepsilon_{11}}{\varepsilon_{33}}=\frac{\lambda}{2(\lambda+\mu)}=\frac{V_P^2-2V_S^2}{2(V_P^2-V_S^2)}=\nu_{ij}\qquad i\,,j\,\in\{1,\,2,\,3\}\,.
\end{equation}
The above expression is derived for uniaxial stress in the $x_3$ direction, but is valid for any direction.
We denote the axial strain by letter $i$\,, while $j$ stands for the lateral strain.
Poisson's ratio is stable if the denominator $2(\lambda+\mu)$ is positive.
Hence, negative numerator, $\lambda$\,, implies negative $\nu$\,.
Therefore, negative Poisson's ratio is another indicator of negative $g^{\rm iso}_2$ and $g^{\rm iso}_7$\,.
Also, notice that $\nu<0$ if $V_P/V_S<\sqrt{2}$\,.
Let us discuss, the cubic symmetry.
In such a case, $g$ is negative if and only if $C_{13}$ is negative, which is tantamount to negative Poisson's ratio, since we have
\begin{equation}
\nu_{ij}=\frac{C_{13}}{C_{11}+C_{13}}\,
\end{equation}
and to satisfy the stability condition the denominator must be positive.
For TI and tetragonal symmetries, Poisson's ratio
\begin{equation}\label{eq:ti1}
\nu_{31}=\nu_{32}=\frac{C_{13}}{C_{11}+C_{12}}\,,\qquad \nu_{13}=\nu_{23}=\frac{C_{13}(C_{11}-C_{12})}{C_{33}C_{11}-C_{13}^2}\,
\end{equation}
is negative if and only if $g_2$ is negative ($C_{13}$ must be less than zero) and 
\begin{equation}\label{eq:ti2}
\nu_{21}=\nu_{12}=\frac{C_{33}C_{12}-C_{13}^2}{C_{33}C_{11}-C_{13}^2}\,
\end{equation}
is negative if and only if $g_7$ is negative.
Note that expressions~(\ref{eq:iso2}),~(\ref{eq:ti1}), and~(\ref{eq:ti2}) are also derived in~\citet{MavkoEtAl2009}.
For the trigonal symmetry class, we get the following Poisson's ratios.
\begin{gather}
\nu_{31}=\nu_{32}=\frac{C_{13}\left(C_{11}-C_{12}-2\dfrac{C_{15}^2}{C_{44}}\right)}{C_{11}\left(C_{11}-2\dfrac{C_{15}^2}{C_{44}}\right)-C_{12}\left(C_{12}+2\dfrac{C_{15}^2}{C_{44}}\right)}\,, \label{trigpois1}
\\ 
\nu_{21}=\nu_{12}=\frac{C_{12}-\dfrac{C_{13}^2}{C_{33}}+\dfrac{C_{15}^2}{C_{44}}}{C_{11}-\dfrac{C_{13}^2}{C_{33}}-\dfrac{C_{15}^2}{C_{44}}}=\frac{g_7^{\rm trig}}{g_{6}^{\rm trig}}\,, \label{trigpois2}
\\
\nu_{13}=\nu_{23}=\frac{\dfrac{C_{13}}{C_{33}}\left(C_{11}-C_{12}-2\dfrac{C_{15}^2}{C_{44}}\right)}{C_{11}-\dfrac{C_{13}^2}{C_{33}}-\dfrac{C_{15}^2}{C_{44}}}
=\frac{g_2^{\rm trig}a}{g_{6}^{\rm trig}}
\,.
\end{gather}
Let us discuss expression~(\ref{trigpois1}).
The term in the nominator in parenthesis, to which we later refer as $a$\,, must be nonnegative due to the last inequality in condition~(\ref{stab:trig}).
Thus, first parenthesis in the denominator must be also positive and equal or larger than $C_{12}$\,.
Second parenthesis must be equal or smaller than $C_{11}$\,.
Therefore, the denominator must be positive.
Due to the above analysis, we notice that $\nu_{31}$ and $\nu_{32}$ are negative if and only if $C_{13}$ is negative.
In other words, negative $g_2^{\rm trig}$ is tantamount to negative $\nu_{31}$ or $\nu_{32}$\,.
On the other hand, according to expression~(\ref{trigpois2}), negative $g_7^{\rm trig}$ is tantamount to negative $\nu_{21}$\,.
Lastly, $\nu_{13}$ and $\nu_{23}$ are negative if and only if $g_2^{\rm trig}<0$\,.

For orthotropic symmetry class, we obtain the following Poisson's ratios.
\begin{gather}
\nu_{31}=\frac{C_{13}C_{22}-C_{12}C_{23}}{C_{11}C_{22}-C_{12}^2}=\frac{n_1}{d_3}\,,\\
\nu_{32}=\frac{C_{23}C_{11}-C_{12}C_{13}}{C_{11}C_{22}-C_{12}^2}=\frac{n_2}{d_3}
\,,
\\
\nu_{21}=\frac{C_{12}C_{33}-C_{13}C_{23}}{C_{11}C_{33}-C_{13}^2}=\frac{n_3}{d_2}\,,\\
\nu_{23}=\frac{C_{23}C_{11}-C_{12}C_{13}}{C_{11}C_{33}-C_{13}^2}=\frac{n_2}{d_2}
\,,
\\
\nu_{12}=\frac{C_{12}C_{33}-C_{13}C_{23}}{C_{22}C_{33}-C_{23}^2}=\frac{n_3}{d_1}\,,\\
\nu_{13}=\frac{C_{13}C_{22}-C_{12}C_{23}}{C_{22}C_{33}-C_{23}^2}=\frac{n_1}{d_1}
\,,
\end{gather}
where denominators $d_1$\,, $d_2\,$, and $d_3$ must be positive due to the stability conditions.
Numerator $n_3=C_{33}\,g_7^{\rm ort}$\,, hence, negative $g_7^{\rm ort}$ implies negative $\nu_{21}$ and $\nu_{12}$\,.
The analysis of numerators $n_1$ and $n_2$ is more complicated since we cannot simply express it in terms of $g_i^{\rm ort}$\,.
However, based on MC simulations we notice that:
\begin{itemize}
\item{if $g_2^{\rm ort}<0$ and $g_3^{\rm ort}<0$ then $n_1$ and $n_2$ cannot be both positive,}
\item{if $g_2^{\rm ort}>0$ and $g_3^{\rm ort}>0$ then $n_1$ and $n_2$ cannot be both negative,}
\item{if $g_2^{\rm ort}>0$ and $g_3^{\rm ort}<0$ then both $n_1<0$ and $n_2>0$ are not allowed,}
\item{if $g_2^{\rm ort}<0$ and $g_3^{\rm ort}>0$ then both $n_1>0$ and $n_2<0$ are not allowed.}
\end{itemize} 
The above statements are proved analytically in Appendix~\ref{ap2}.

For a monoclinic symmetry we get \small
\begin{gather}
\nu_{31}=\frac{C_{13}C_{26}^2-C_{16}C_{23}C_{26}-C_{12}C_{26}C_{36}+C_{16}C_{22}C_{36}+C_{12}C_{23}C_{66}-C_{13}C_{22}C_{66}}{C_{66}C_{12}^2-2C_{12}C_{16}C_{26}+C_{22}C_{16}^2+C_{11}C_{26}^2-C_{11}C_{22}C_{66}}=\frac{n_1}{d_3}\,
,\\
\nu_{32}=\frac{C_{16}^2 C_{23}-C_{13}C_{16}C_{26}-C_{12}C_{16}C_{36}+C_{11}C_{26}C_{36}+C_{12}C_{13}C_{66}-C_{11}C_{23}C_{66}}{C_{66}C_{12}^2-2C_{12}C_{16}C_{26}+C_{22}C_{16}^2+C_{11}C_{26}^2-C_{11}C_{22}C_{66}}=\frac{n_2}{d_3}\,
,\\
\nu_{21}=\frac{C_{12}C_{36}^2-C_{13}C_{26}C_{36}-C_{16}C_{23}C_{36}+C_{16}C_{26}C_{33}+C_{13}C_{23}C_{66}-C_{12}C_{33}C_{66}}{C_{66}C_{13}^2-2C_{13}C_{16}C_{36}+C_{33}C_{16}^2+C_{11}C_{36}^2-C_{11}C_{33}C_{66}}=\frac{n_3}{d_2}\,
,\\
 \nu_{23}=\frac{C_{16}^2 C_{23}-C_{13}C_{16}C_{26}-C_{12}C_{16}C_{36}+C_{11}C_{26}C_{36}+C_{12}C_{13}C_{66}-C_{11}C_{23}C_{66}}{C_{66}C_{13}^2-2C_{13}C_{16}C_{36}+C_{33}C_{16}^2+C_{11}C_{36}^2-C_{11}C_{33}C_{66}}=\frac{n_2}{d_2}\,,
 \\
 \nu_{12}=\frac{C_{12}C_{36}^2-C_{13}C_{26}C_{36}-C_{16}C_{23}C_{36}+C_{16}C_{26}C_{33}+C_{13}C_{23}C_{66}-C_{12}C_{33}C_{66}}{C_{66}C_{23}^2-2C_{23}C_{26}C_{36}+C_{33}C_{26}^2+C_{22}C_{36}^2-C_{22}C_{33}C_{66}}=\frac{n_3}{d_1}\,,
 \\
 \nu_{13}=\frac{C_{13}C_{26}^2-C_{16}C_{23}C_{26}-C_{12}C_{26}C_{36}+C_{16}C_{22}C_{36}+C_{12}C_{23}C_{66}-C_{13}C_{22}C_{66}}{C_{66}C_{23}^2-2C_{23}C_{26}C_{36}+C_{33}C_{26}^2+C_{22}C_{36}^2-C_{22}C_{33}C_{66}}=\frac{n_1}{d_1}\,.
\end{gather}\normalsize
Due to complicated forms of Poisson's ratios, we again are not able to analytically express the relationship between the sign of $\nu$ and its influence on $g$\,.
Based on MC simulations, we notice that 
the negative sign of alone $g_2^{\rm mon}$\,, $g_3^{\rm mon}$\,, $g_7^{\rm mon}$\,, $g_{m_1}$\,, $g_{m_2}$\,, or alone $g_{m_3}$ cannot restrict the sign of any $\nu_{ij}$\,.
Also, all negative or all positive Poisson's ratios do not imply the negative sign of any $g$\,.
There are, however, some combinations of negative $g$ that imply negative sign of certain $\nu_{ij}$\,.
For instance, negative $g_2^{\rm mon}$\,, $g_3^{\rm mon}$\,, and $g_7^{\rm mon}$\,, or negative $g_7^{\rm mon}$\,, $g_{m_2}$\,, and $g_{m_3}$\,, or negative $g_3^{\rm mon}$\,, $g_{m_1}$\,, and $g_{m_3}$\,, imply that certain $\nu_{ij}$ must be negative.

To conclude, for isotropic, cubic, TI, and tetragonal symmetries, negative Poisson's ratio in any axial direction implies some negative $g\,$, and any negative $g$ implies some $\nu_{ij}<0$\,. 
The above is not always true for trigonal and orthotropic symmetries.
In case of a trigonal symmetry class, negative $g_t$\,, and in case of an orthotropic class, negative $g_2^{\rm ort}$ or negative $g_3^{\rm ort}$\,, do not imply that certain $\nu_{ij}<0$\,.
In monoclinic case, negative $\nu$ (in all axial directions) does not imply negative sign of any $g$\,, but some combinations of negative $g$'s imply that certain $\nu_{ij}<0$\,.
\subsubsection{Crystals, minerals, and rocks with negative Poisson's ratio}
Since, in the majority of symmetry classes examined by us, the presence of negative Poisson's ratio is tantamount to some negative $g$\,, it is reasonable to check the sign of this ratio for the layered rocks.
The appearance of $g<0$ in some individual layers may lead to ${\overline {g}}\approx0$ of the equivalent medium that, in turn, can cause the inaccuracy of Backus approximation.
In general, $\nu<0$ is not likely to occur in geophysical data, however, as~\citet{Zaitsev} state,
\begin{quote}
rocks with negative Poisson ratios are not rare exceptions, in contrast to conventional belief.
\end{quote}
As numerically shown by~\citet{KudelaStanoev}, negative Poisson's ratio does not occur in the global seismological case exemplified by the Preliminary reference Earth model~\citep{Dziewonski}.
However, there are some laboratory and well-log cases in which $\nu<0$ has been noticed locally.
By doing a detailed literature review, we invoke them below.

First, let us discuss naturally occurring auxetic crystals and minerals that have been investigated in a laboratory.
Using spectroscopic techniques, \citet{Cristobalite} show that $\alpha$-cristobalite, which is a low-temperature modification of a crystalline form of silica (${\rm SiO_2}$), exhibit negative Poisson's ratio.
Due to its elastic anisotropy, the value of $\nu$ varies with the direction of uniaxial stress.
Poisson's ratio occurs to range from $-0.5$ to $0.08$\,, although it remains predominantly negative.
Cristobalite widely occurs in nature~\citep{Cristobalite}.
It forms in volcanic lava domes and often can be found in acidic volcanic rocks~\citep{CristobaliteVolcanic}.
Also, it can occur in soils~\citep{CristobaliteSoil}, deep-sea cherts and porcelanites~\citep{CristobaliteCherts}, or other sedimentary rocks~\citep{CristobaliteSedimentary}.
Therefore, one should not disregard its potential influence on the rock's Poisson's ratio.  
A mineral that also may have $\nu<0$ is zeolite~(\citet{Zeolite},\,\citet{Zeolite2},\,\citet{Zeolite3}).
It naturally occurs, for instance, in deep-sea sediments or geothermal systems~\citep{ZeoliteOccur}.
More, however, very rare auxetic minerals are indicated by~\citet{Baughman}.
Also, as stated by~\citet{Lakes}, it is more likely for the highly anisotropic minerals or crystals to have negative Poisson's ratio, than for the isotropic ones.
For instance, single-crystal form of anisotropic arsenic, antimony, and bismuth exhibit $\nu<0$ in certain directions.
However, there is a case of an auxetic mineral that is isotropic. 
It occurs that, depending on the temperature, polycrystalline quartz exhibits low, very low, or negative Poisson's ratio~\citep{Quartz}.
According to to~\citet{JiEtAl2010}, the presence of this mineral may render some rocks to be auxetic.
At ambient conditions, the Poisson's ratio of quartz is $\nu=0.08$\,~\citep{JiEtAl2018}.

Let us invoke some laboratory examples of various auxetic rocks.
\citet{Nur} have noticed that very small or negative values of $\nu$ are exhibited by dry rocks at very low pressure. 
They observed that, if there is no external pressure, Casco and Westerly granites present $\nu=-0.100$ and $\nu=-0.094\,$, respectively. 
Twenty years later,~\citet{Houpert} examine Senones and Remiremont granites with thermally induced cracks.
These rocks occur to have negative Poisson's ratio for various directions of uniaxial stresses, which is probably caused, as authors state, by numerous microcracks. 
The investigation of~\citet{Zaitsev} confirm that negative $\nu$ can primarily occur in cracked rocks at low pressures.
Based on the works of~\citet{Coyner},~\citet{Freund}, and~\citet{MavkoJizba}, they invoke thirty-four rock samples of cracked rocks with $\nu<0$ at $8\,\rm{MPa}$ confining pressure.
\citet{Gregory} examined twenty samples of sedimentary rocks at different pressures and at ambient temperature.
He notices that apart from the low pressure, $\nu<0$ (presented in many examined samples) is caused by gas saturation and low porosity.
The above statement is confirmed by the experiments of~\citet{Han} and~\citet{Jizba}.
In their works, negative Poisson's ratio is exhibited only by low-porous sedimentary rocks; consolidated sandstones and gas sandstones, respectively.
Such results were obtained, \textit{inter alia}, for the approximate effective pressure in the well, that is, for $20\,\rm{MPa}$~\citep{Dvorkin}.
\citet{JiEtAl2010} show that $\nu$ decreases with increasing temperature due to thermal effects.
According to the authors, quartz-rich rocks at a temperature approaching 
the $\alpha$--$\beta$ quartz transition (such as granite, diorite, quartz-rich sandstone, etc.) may display negative values of Poisson's ratio.
They use quartzite as an example to illustrate the effect of phase transition on $\nu$\,.
The quartz-transition temperature is at about \SI{600}{\celsius}, however, quartzite has $\nu=0$ at the temperature of \SI{450}{\celsius} only.
Between \SI{450}{\celsius} and \SI{600}{\celsius} it exhibits $\nu<0$~\citep[Figure 3b]{JiEtAl2010}.
Recently, the topic of auxetic natural rocks has been studied carefully by~\citet{JiEtAl2018}. They state that
\begin{quote}
none of the crystalline igneous and metamorphic rocks (e.g., amphibolite, gabbro, granite, peridotite, and schist) display auxetic behaviour at pressures of $>5\,\rm{MPa}$ and room temperature. Our experimental measurements showed that quartz-rich sedimentary rocks (i.e., sandstone and siltstone) are most likely to be the only rocks with negative Poisson's ratios at low confining pressures ($\leq200\,\rm{MPa}$) because their main constituent mineral, $\alpha$-quartz, already has extremely low Poisson's ratio ($\nu=0.08$) and they contain microcracks, micropores, and secondary minerals.
\end{quote}
In the most recent work on auxetic rocks,~\citet{JiEtAl2019} state that
\begin{quote}
negative Poisson's ratio cannot occur in wet volcanic rocks but may appear in a dry basalt with such an extremely high porosity $(\geq70\%)$ that a re-entrant foam structure has formed.
\end{quote}
Hence, apart from the laboratory experiments, we expect to detect the negative $\nu$ in the seismological studies, in the quartz-rich continental crust with a high geothermal gradient~\citep{JiEtAl2010}, quartz-rich and gas-bearing sedimentary rocks, or in dry, highly porous basalts.

Finally, we invoke the examples of auxetic rocks obtained from well-log measurements.
Let us consider the work of~\citet{CastagnaSmith}, where the worldwide collection of twenty--five sets of velocity and density measurements is exhibited.  
These measurements of brine sands, shales and gas sands, are based on well-log and laboratory data and occur in close in-situ proximity.
Based on the velocities of primary and secondary waves, along with the densities, we compute $\nu$\,.
Two samples of gas sands occur to have negative Poisson's ratio, whereas
another sample has a positive value, but very close to zero.
Their values are $\nu=-0.18$\,, $\nu=-0.0162$\,, and $\nu=1.02\times 10^{-4}$\,, respectively.
We find another example of well-log data with negative Poisson's ratio in~\citet[Table 2]{Goodway}.
The ostracod shale from the Mannville Group in Western Canadian Sedimentary Basin (WCSB), occurs to be auxetic.
The ostracod beds are used in the gas and oil exploration~\citep{HayesEtAl1994}, and in particular, oils are sourced from the ostracod shales~\citep{FayEtAl2012}.
Hence, this is an important case from the explorational point of view.
Based on the density, along with the P and S velocities, we again compute Poisson's ratio and obtain $\nu=-0.11$\,.
Further,~\citet{Crewes} present a substantial collection of data from twelve wells in offshore Newfoundland, Eastern Canada.
Based on the P and S wave velocities from their Figure 5, we read that subsets of a dataset from at least two wells present $\nu<0$\,. 

The above real-data examples confirm most of the expectations coming from the laboratory measurements. 
To conclude, the ideal conditions for a rock to be auxetic are high temperature and low pressure.
Additionally, the chances for the auxetic behaviour are larger if the rock is dry or gas-bearing, quartz-rich, has numerous cracks, and low porosity. 
\subsection{Numerical examples}\label{sec:num}
Let us consider some numerical examples to check if the signal that propagates through thin layers would change its shape and magnitude if propagating through the equivalent medium with ${\overline{g}}\approx0$\,.
In cases that we examine, Poisson's ratio of each layer is low. 
In turn, $g$'s of individual constituents are close to zero, which causes the average ${\overline{g}}\approx0$\,.
We use some practical examples of $\nu$ from Section~\ref{sec:pois}.
\subsubsection{Wave propagation modelling}
In this paper, we analyse the wave propagation in two dimensions, namely, in the $x_3x_1$-plane.
In such a case the elastic equations of motion have the following form.
\begin{equation}
\rho\frac{\partial^2 u_i}{\partial t^2}=\frac{\partial \sigma_{i1}}{\partial x_1}+\frac{\partial \sigma_{i3}}{\partial x_3}+f_i\,, \qquad i\in\{1,\,3\}\,,
\end{equation}
where $\rho$ is a mass density and $f$ is a body force.
To obtain the wave equations, we need to insert---into the equations of motion above---the 2D stress-strain relations and expression~(\ref{strain}).
To do so, we first reduce relations~(\ref{stressstrain}) to two dimensions, namely,
\begin{equation}\label{stressstrain2}
\left[
\begin{array}{c}
\sigma_{11}\\
\sigma_{33}\\
\sigma_{13}\\
\end{array}
\right]
=
\left[
\begin{array}{cccccc}
C_{11}& C_{13} & 0   \\
C_{13}& C_{33} & 0  \\
 0 & 0 & C_{55} \\
\end{array}
\right]
\left[
\begin{array}{c}
\varepsilon_{11}\\
\varepsilon_{33}\\
2\varepsilon_{13}\\
\end{array}
\right]
\,,
\end{equation}
which are the strain-stress relations valid not only for monoclinic, but also orthogonal, tetragonal, and TI symmetry classes.
For cubic case, $C_{33}=C_{11}$\,, whereas for isotropic class of symmetry, additionally, $C_{13}=C_{11}-2C_{55}$\,. 
(Herein, we do not consider a trigonal class).
We solve the resulting elastic wave equations using the open-source seismic modelling code {\textit{ewefd2d}} in the Madagascar package~\citep{Fomel}.
The code implements a time-domain finite difference method.

To be able to solve the wave equations numerically, we need to define a computational mesh.
We model seismic data on a $N_{x_1}\times N_{x_3}=1500^2$ mesh at uniform $\Delta x_1=\Delta x_3=2 {\, \rm m}$ spacing.
We assume low-frequency stress source injected in the $x_3$-axis only.
For this purpose, we use the Ricker wavelet with a dominant frequency of $12\,\rm{Hz}$\,.
We locate the source at $(x_1,\,x_3)=(1500\,{\rm m},\,1500\,{\rm m})$ and a receiver at $(x_1,\,x_3)=(1500\,{\rm m},\,1620\,{\rm m})$.

In our simulations, we consider a periodic, three-layered system, to which we refer as a PL medium.
We propose five different examples of PL media that we denote by roman letters.
Media $I$--$III$ represent isotropic layers. 
Medium $IV$ consists of cubic layers, whereas $V$ is composed of layers that exhibit either monoclinic, orthotropic, tetragonal or TI symmetry classes.
Layers are placed horizontally and uniformly separated by $2\Delta x_3=4\,\rm{m}\,$.
Hence, the receiver is separated from the source by a PL medium that consists of ten sets of three layers (in total $120\,\rm{m}$).
Elasticity coefficients and densities of the layers are presented in Table~\ref{tab:models}.
\begin{table}[!htbp]
\scalebox{0.9}{%
\begin{tabular}
{cccccc}
\toprule
 & $I \,({\rm iso})$ & $II\, ({\rm iso})$ & $III \,({\rm iso})$ & $IV\,({\rm cubic})$ & $V\, ({\rm mon/ort/tetr/TI})$\\
\cmidrule{1-6}
layer $1$ &$C_{11}=37.79$&$C_{11}=37.79$&$C_{11}=40$&$C_{11}=45\quad C_{55}=10$&$C_{11}=45\quad C_{55}=10$\\
&$C_{55}=18.89$&$C_{55}=18.89$&$C_{55}=20$&$C_{13}=1.2\times10^{-7}$&$C_{13}=1.2\times10^{-7}$\\
&$\rho=2410$&$\rho=2410$&$\rho=2410$&$\rho=2200$&$C_{33}=35 \quad \rho=2200$\\
\cmidrule{1-6}
layer $2$ &$C_{11}=5.93$&$C_{11}=20.29$&$C_{11}=20$&$C_{11}=20\quad C_{55}=5$&$C_{11}=20\quad C_{55}=5$\\
&$C_{55}=2.78$&$C_{55}=10.14$&$C_{55}=10$&$C_{13}=1.0\times10^{-7}$&$C_{13}=1.0\times10^{-7}$\\
&$\rho=2100$&$\rho=2300$&$\rho=2300$&$\rho=1800$&$C_{33}=15\quad \rho=1800$\\
\cmidrule{1-6}
layer $3$ &$C_{11}=62.44$&$C_{11}=37.79$&$C_{11}=40$&$C_{11}=30\quad C_{55}=8$&$C_{11}=30\quad C_{55}=8$\\
&$C_{55}=28.21$&$C_{55}=18.89$&$C_{55}=20$&$C_{13}=0.8\times10^{-7}$&$C_{13}=0.8\times10^{-7}$\\
&$\rho=2590$&$\rho=2410$&$\rho=2410$&$\rho=2000$&$C_{33}=22\quad \rho=2000$\\
\bottomrule
\end{tabular}
\caption[\small{Elastic properties of periodic three-layered (PL) media}]{\small{Five different elastic PL media. Elasticity parameters are in $\rm{GPa}$, whereas density in ${\rm kg/m}^3\,$.}}
\label{tab:models}
}
\end{table}

Also, using expression~(\ref{eq:backus}) and formulas from Appendix~\ref{ch4:ap1}, we compute the elastic coefficients of the media equivalent to $I$--$V$.
The equivalent density is the arithmetic average of densities of individual layers. 
To model the wave propagation---similarly to the PL case---we insert the computed parameters of the equivalent media into the wave equations.

We compare the displacement propagation in PL and equivalent media by using the following semblance.
\begin{equation}
S=\frac{\sum_i(a_i+b_i)^2}{2\sum_i(a_i^2+b_i^2)}\times 100\%\,,
\end{equation}
where $a_i$ and $b_i$ are the discrete values of displacement changing with time in both media.
\subsubsection{Results}
Medium $I$ consists of isotropic layers corresponding to gas-bearing sandstones presented in~\citet{CastagnaSmith} (sets 6, 15, and 12, respectively).
Poisson's ratio of each layer is low, namely, $\nu_1\approx2.6\times10^{-4}$\,, $\nu_2\approx0.06$\,, and $\nu_3\approx0.10$\,.
As a result, the averaged ${\overline{g_2}}\approx0.05$ is low as well.
We present the propagation of displacement ($x_3$-component) recorded by the receiver in Figure~\ref{fig:rec1}.
In Figure~\ref{fig:disp1}, we additionally show the snapshots of wave propagation in both components recorded at time $t=0.3\,\rm{s}\,$. 
We notice that displacements are almost identical for both PL and equivalent media, which is confirmed by $S\approx99.99\%$\,. 
Hence, the product approximation, even if $\overline{g}$ is low, seems to be correct, and the average works properly. 
\begin{figure}[!htbp]
        \centering
      \begin{subfigure}[b]{0.45\textwidth}   
            \centering 
            \includegraphics[width=\textwidth]{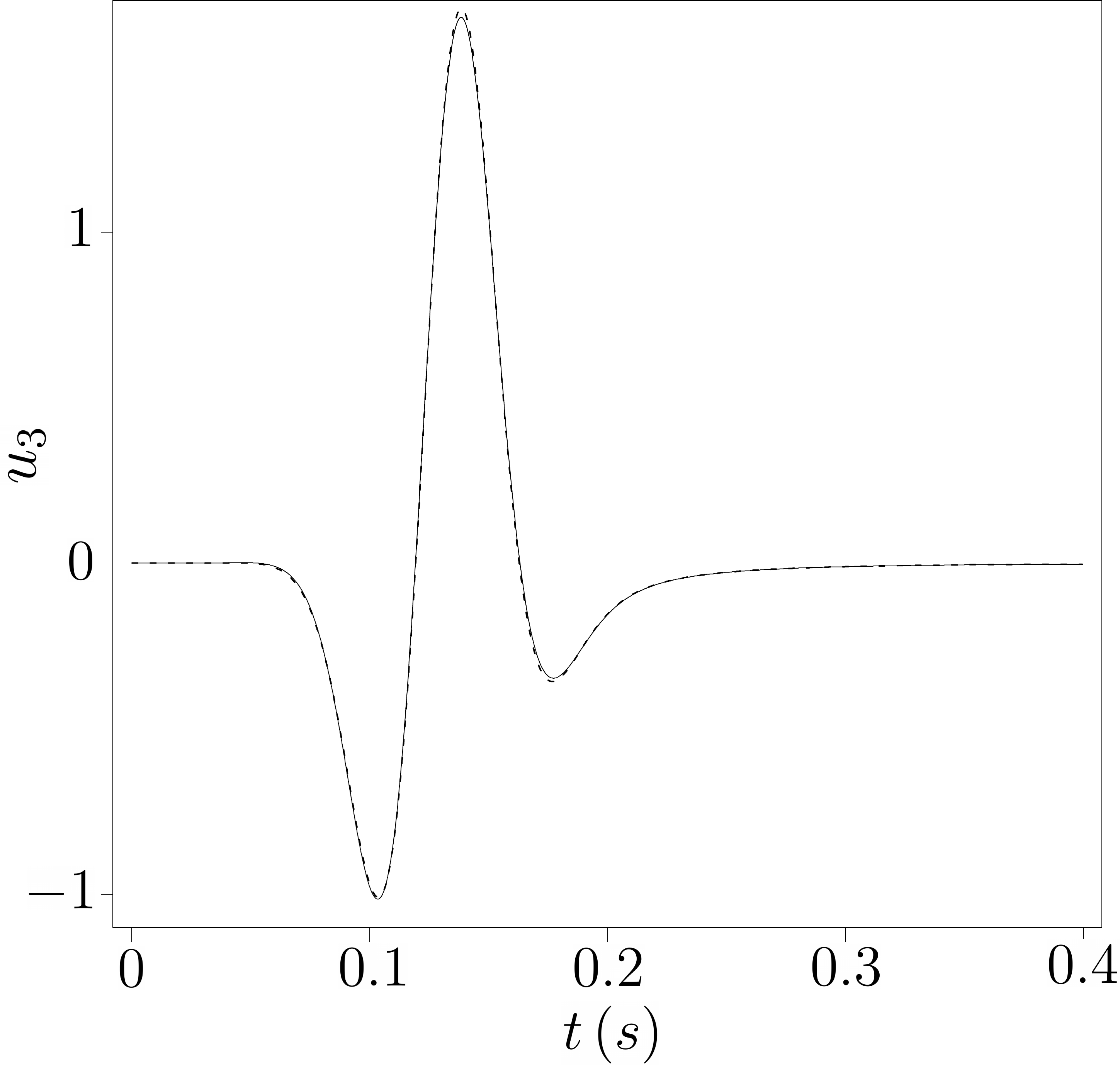}
            \caption[]%
            {{\footnotesize $12\,\rm{Hz}$ dominant frequency}}    
            \label{fig:rec1}
        \end{subfigure}
        \qquad
        \begin{subfigure}[b]{0.45\textwidth}   
            \centering 
            \includegraphics[width=\textwidth]{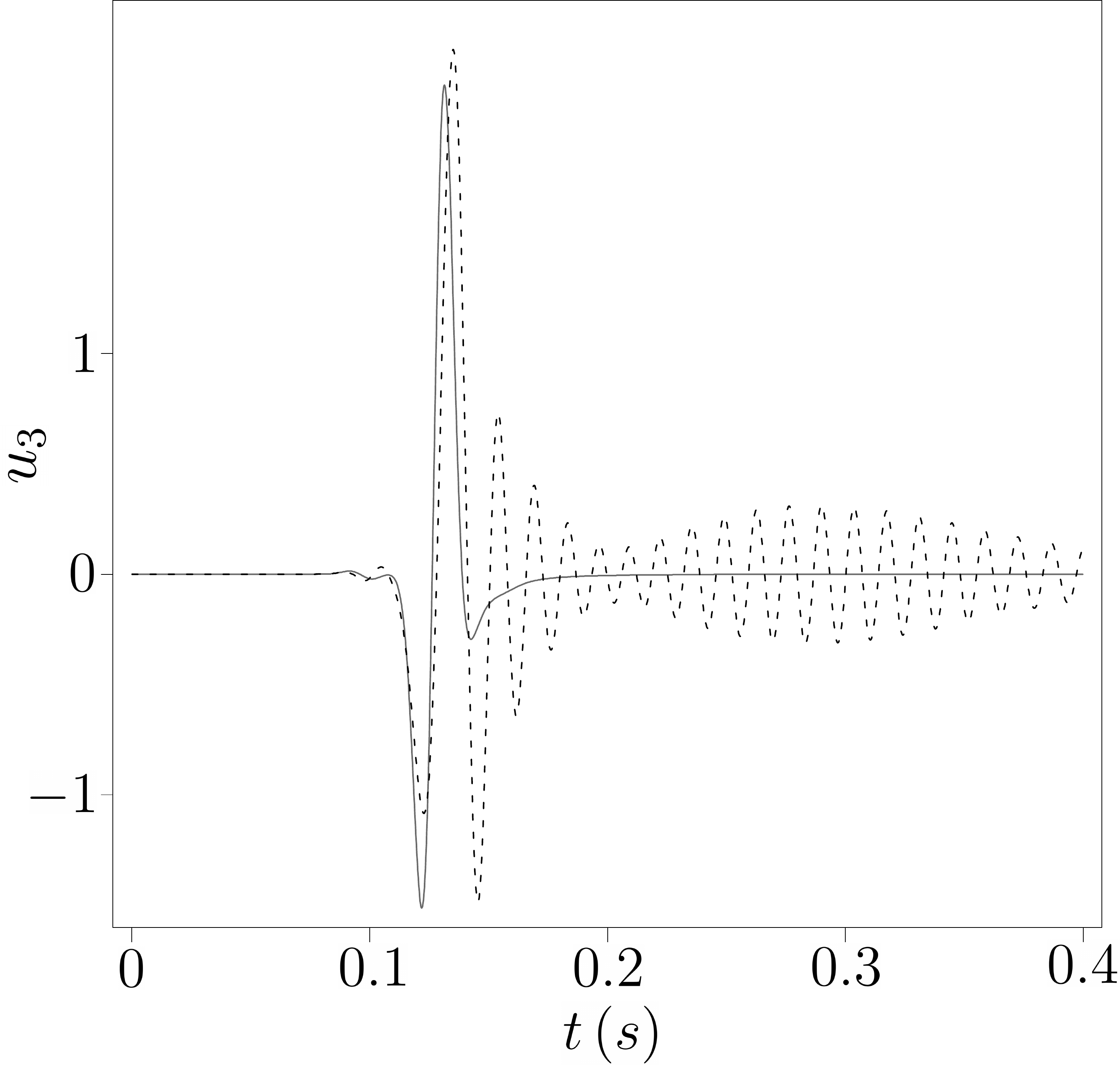}
            \caption[]%
            {{ \footnotesize $48\,\rm{Hz}$ dominant frequency}}    
            \label{fig:rec1s}
        \end{subfigure}
        \caption[\small{Displacement $u_3$ recorded by the receiver: PL and equivalent medium $I$}]
        {\small Displacement $u_3$ recorded by the receiver. Signal in PL and equivalent medium $I$ is denoted by dashed and solid line, respectively.} 
        \label{fig:rec}
    \end{figure}

The properties of Medium $II$ are similar to gas sandstone from~\citet{CastagnaSmith} (layer $1$ and $3$) and ostracod shale from~\citet{Goodway} (layer $2$).
In this example, however, we choose the elasticity parameters in such a way that the resulting ${\overline{g_2}}\approx3\times10^{-4}$ is very low, but still possible to occur in real data case.
The semblance of displacement propagation recorded by the receiver in PL and equivalent medium is high, $S\approx100\%$\,.
Again, Backus approximation appears to be accurate.

Medium $III$ is the idealised version of the previous example. We slightly change the elasticity parameters in a way that ${\overline{g_2}}$ is precisely zero, which is probably impossible to achieve in real seismological case.
Thus, in this example, the relative error of the product approximation is $100\%$\,.
Perhaps surprisingly, the aforementioned error does not influence the accuracy of the Backus average, since $S\approx100\%$\,.

Previous examples regarded isotropic layers only. 
From now on, however, we focus on anisotropic constituents.
Medium $IV$ presents cubic layering. 
The medium equivalent to cubic layers has a tetragonal symmetry class, as we show in Appendix~\ref{ch4:ap1}.
This fact might not be evident for the readers since we have not encountered the above statement or analogical examples in the existing literature.
We set $C_{13}$ to have very small values, so that ${\overline{g_2}}\approx3\times10^{-9}$ appears to be extremely low.
As in previous examples, Backus average works properly ($S\approx100\%$), which is illustrated by Figures~\ref{fig:rec4} and~\ref{fig:disp4}.
The last Medium $V$ represents layers that can exhibit monoclinic, orthotropic, tetragonal, or TI symmetry class. 
The product approximation is inaccurate due to ${\overline{g_2}}\approx5\times10^{-9}$\,.
However, again it does not affect the Backus approximation that is still accurate since $S\approx100\%$\,.
\newpage
It occurs that the low-frequency assumption seems to raise more concerns than the product assumption.
To support the above statement, let us again consider Medium $I$\,, but exceptionally change the dominant frequency of the Ricker wavelet to $48\,\rm{Hz}$\,; thus, let us increase it four times.

Figures~\ref{fig:rec1s} and~\ref{fig:disp1s} illustrate the inaccuracy of the averaging process confirmed by $S\approx82.81\%$ only. 
Later in the text, we refer to this higher-frequency example as to the case $I^*$\,.
For reference, in Table~\ref{tab:semblance}, we present more accurate values of semblances and ${\overline{g_2}}$ for all cases $I$--$V$\,.
\begin{figure}[!]
        \centering
        \begin{subfigure}[b]{0.45\textwidth}
            \centering
            \includegraphics[width=\textwidth]{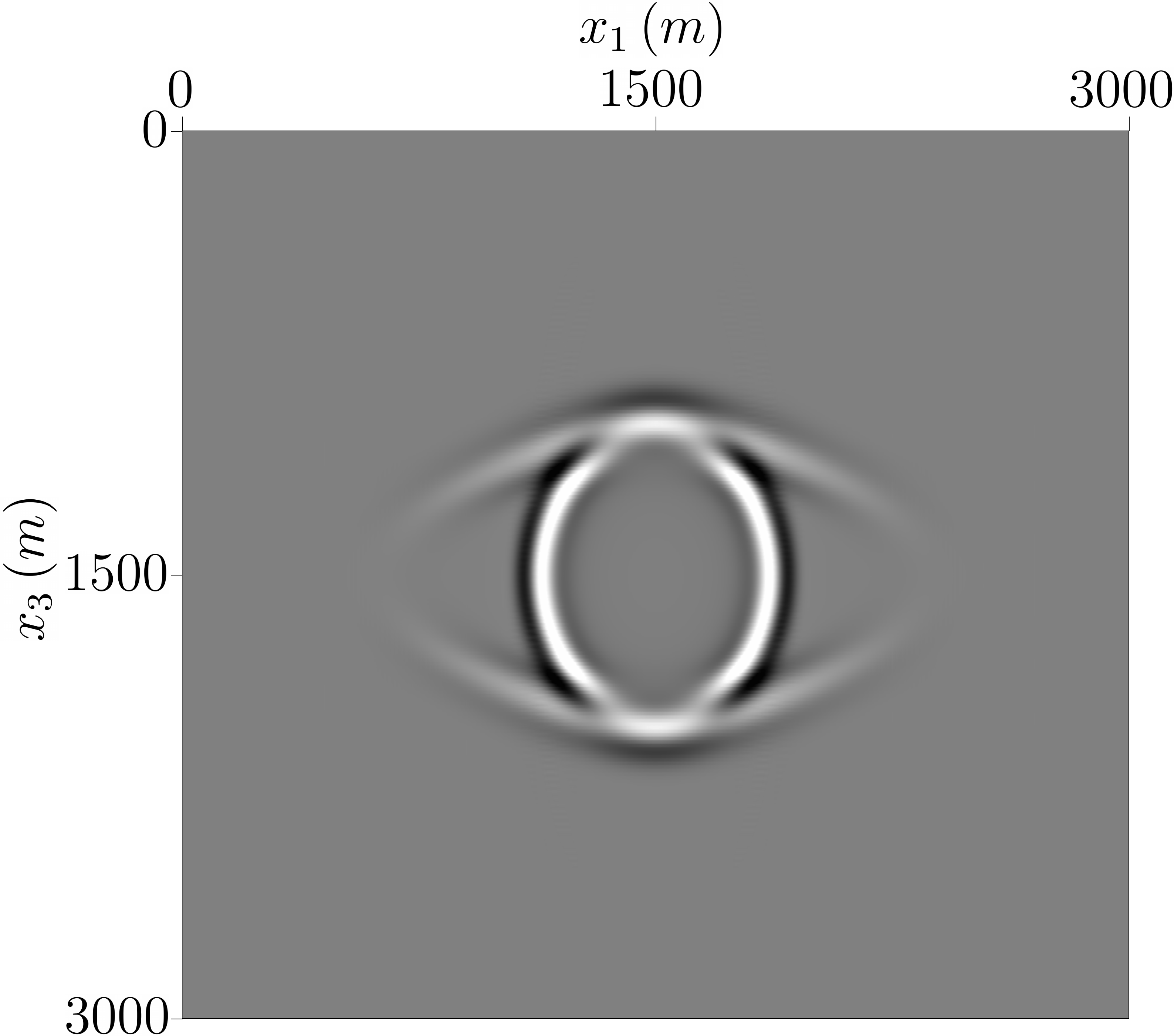}
            \caption%
            {{\footnotesize $u_3$ in PL medium}}    
            \label{fig:disp1a}
        \end{subfigure}
        \quad
        \begin{subfigure}[b]{0.45\textwidth}  
            \centering 
            \includegraphics[width=\textwidth]{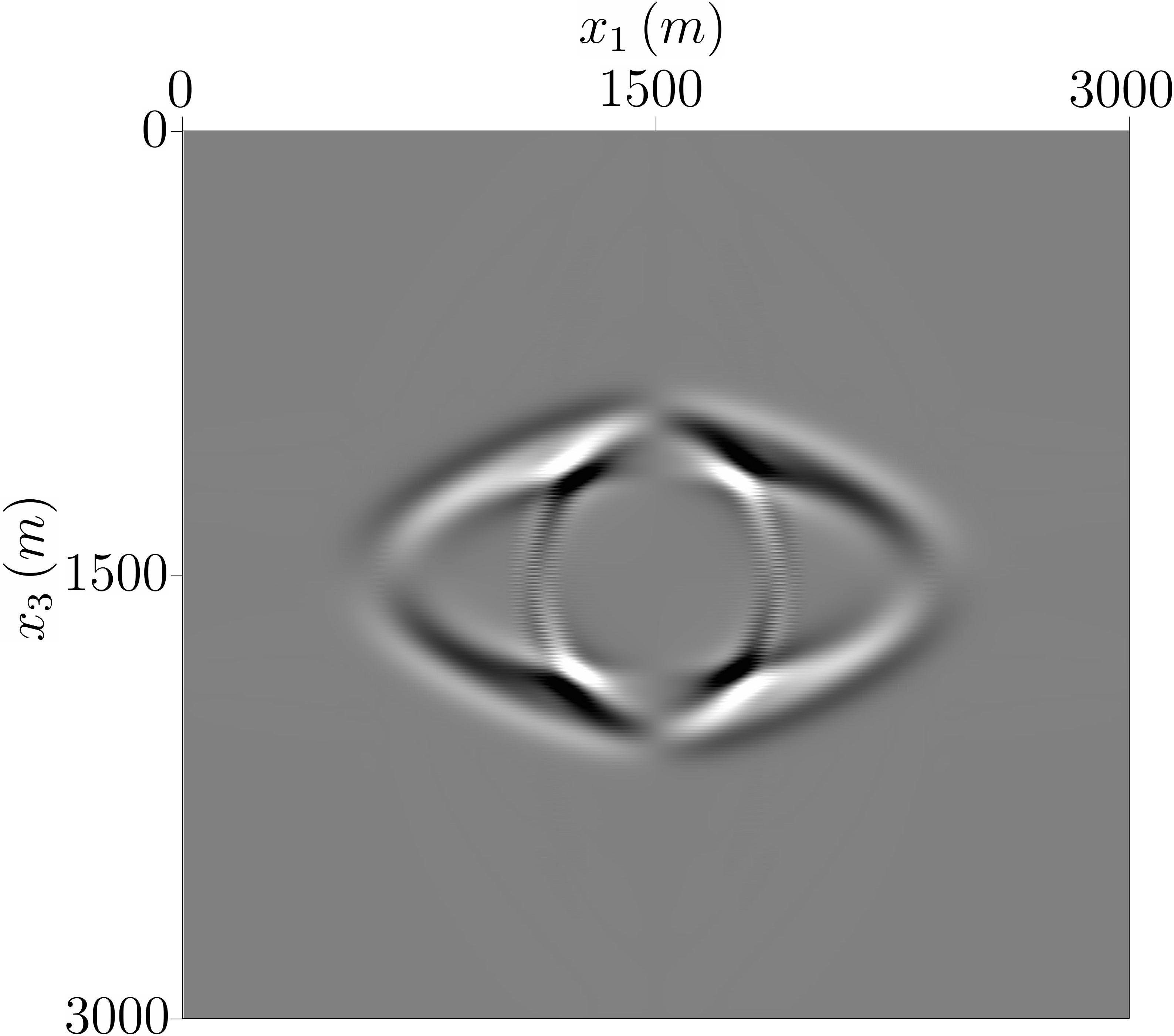}
            \caption[]%
            {{\footnotesize $u_1$ in PL medium}}    
            \label{fig:disp1b}
        \end{subfigure}
        \vskip\baselineskip
        \begin{subfigure}[b]{0.45\textwidth}   
            \centering 
            \includegraphics[width=\textwidth]{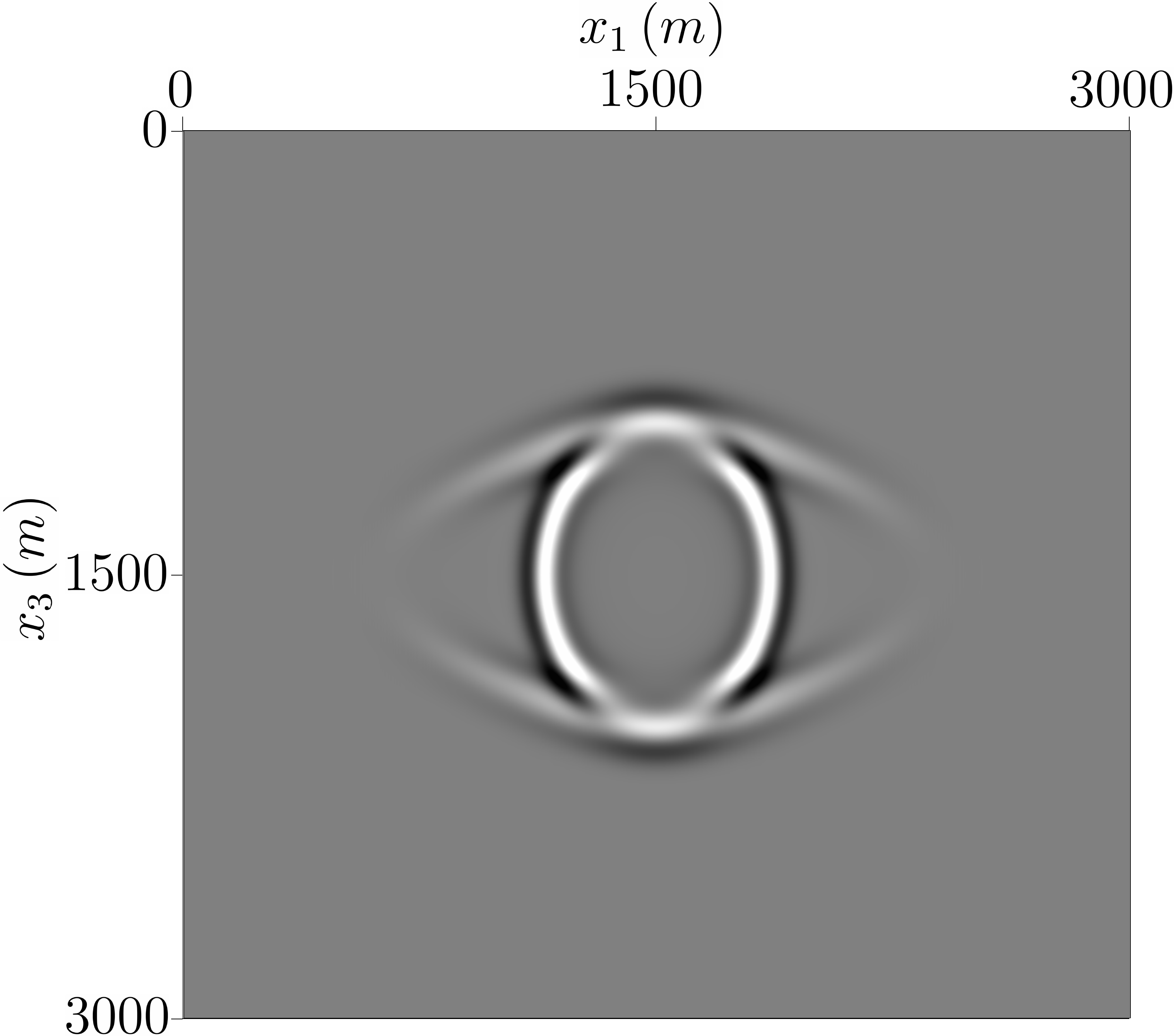}
            \caption[]%
            {{\footnotesize $u_3$ in equivalent medium}}    
            \label{fig:disp1c}
        \end{subfigure}
        \quad
        \begin{subfigure}[b]{0.45\textwidth}   
            \centering 
            \includegraphics[width=\textwidth]{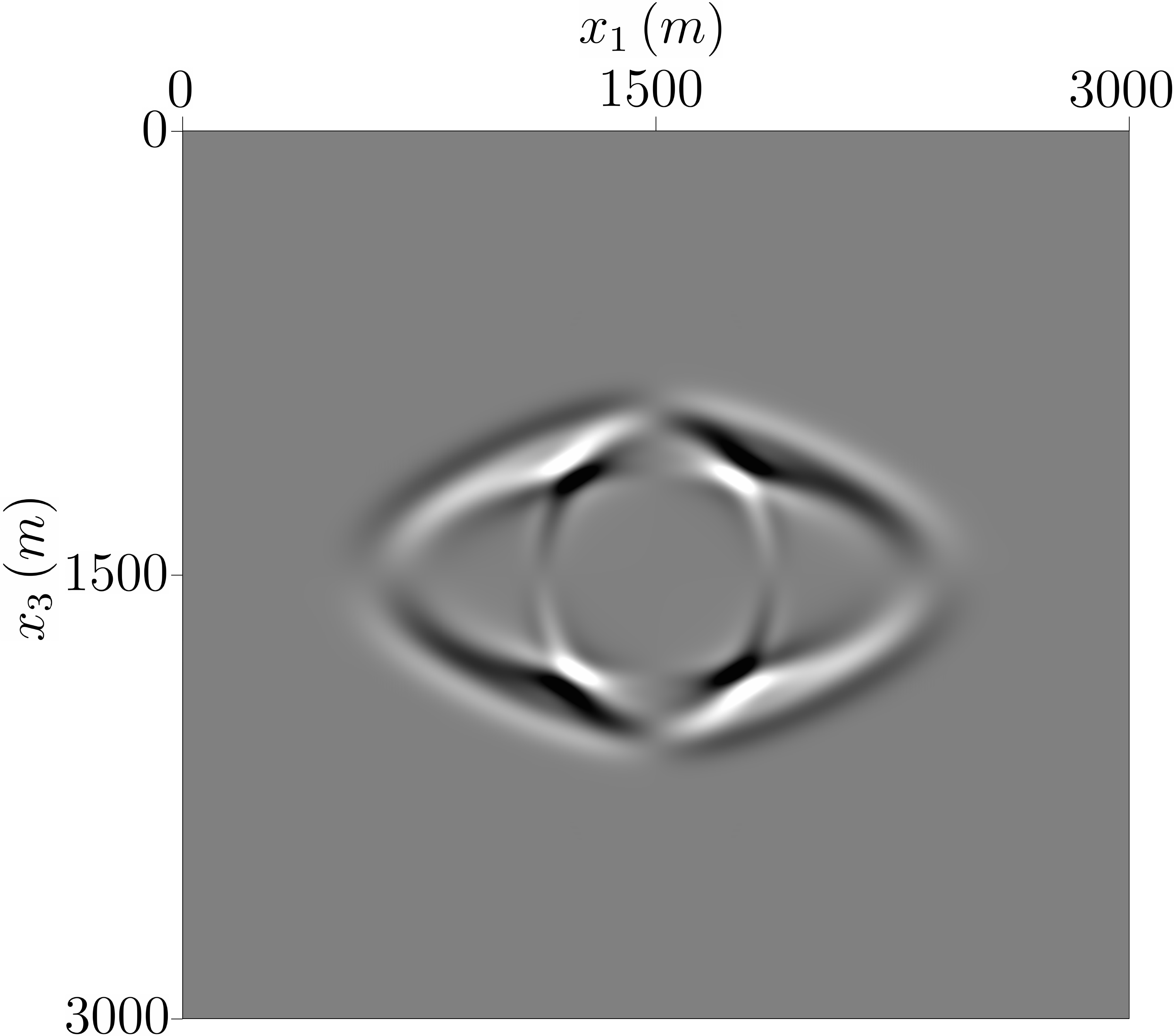}
            \caption[]%
            {{ \footnotesize $u_1$ in equivalent medium}}    
            \label{fig:disp1d}
        \end{subfigure}
        \caption[\small{Snapshots of displacement $u_3$ and $u_1$\,: PL and equivalent medium $I$}]
        {\small Snapshots of displacement $u_3$ and $u_1$ in PL and equivalent medium $I$ at time $t=0.3\,\rm{s}$} 
        \label{fig:disp1}
    \end{figure}
\begin{figure}[!htbp]
        \centering
        \begin{subfigure}[b]{0.45\textwidth}
            \centering
            \includegraphics[width=\textwidth]{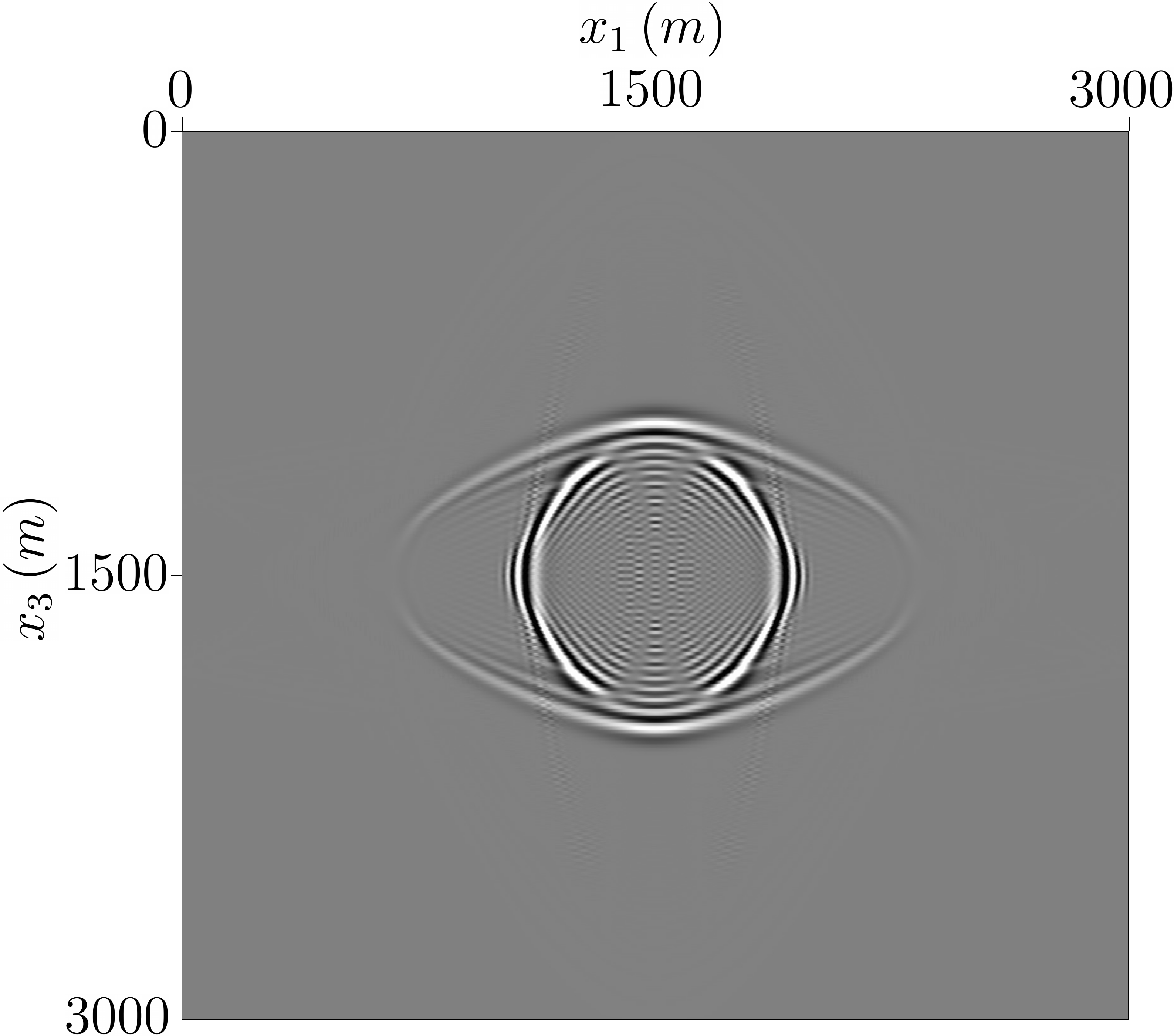}
            \caption%
            {{\footnotesize $u_3$ in PL medium}}    
            \label{fig:disp1sa}
        \end{subfigure}
        \quad
        \begin{subfigure}[b]{0.45\textwidth}  
            \centering 
            \includegraphics[width=\textwidth]{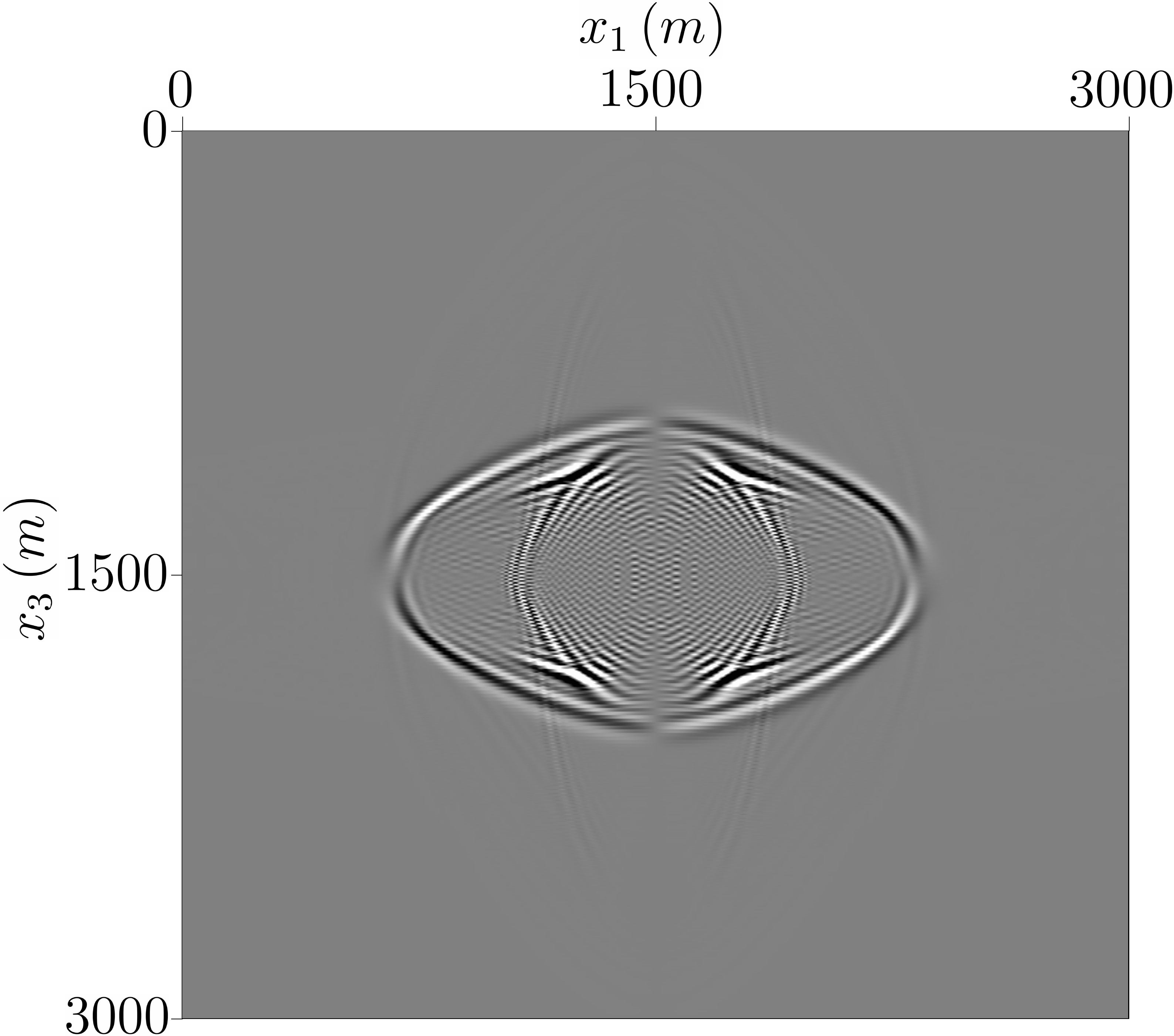}
            \caption[]%
            {{\footnotesize $u_1$ in PL medium}}    
            \label{fig:disp1sb}
        \end{subfigure}
        \vskip\baselineskip
        \begin{subfigure}[b]{0.45\textwidth}   
            \centering 
            \includegraphics[width=\textwidth]{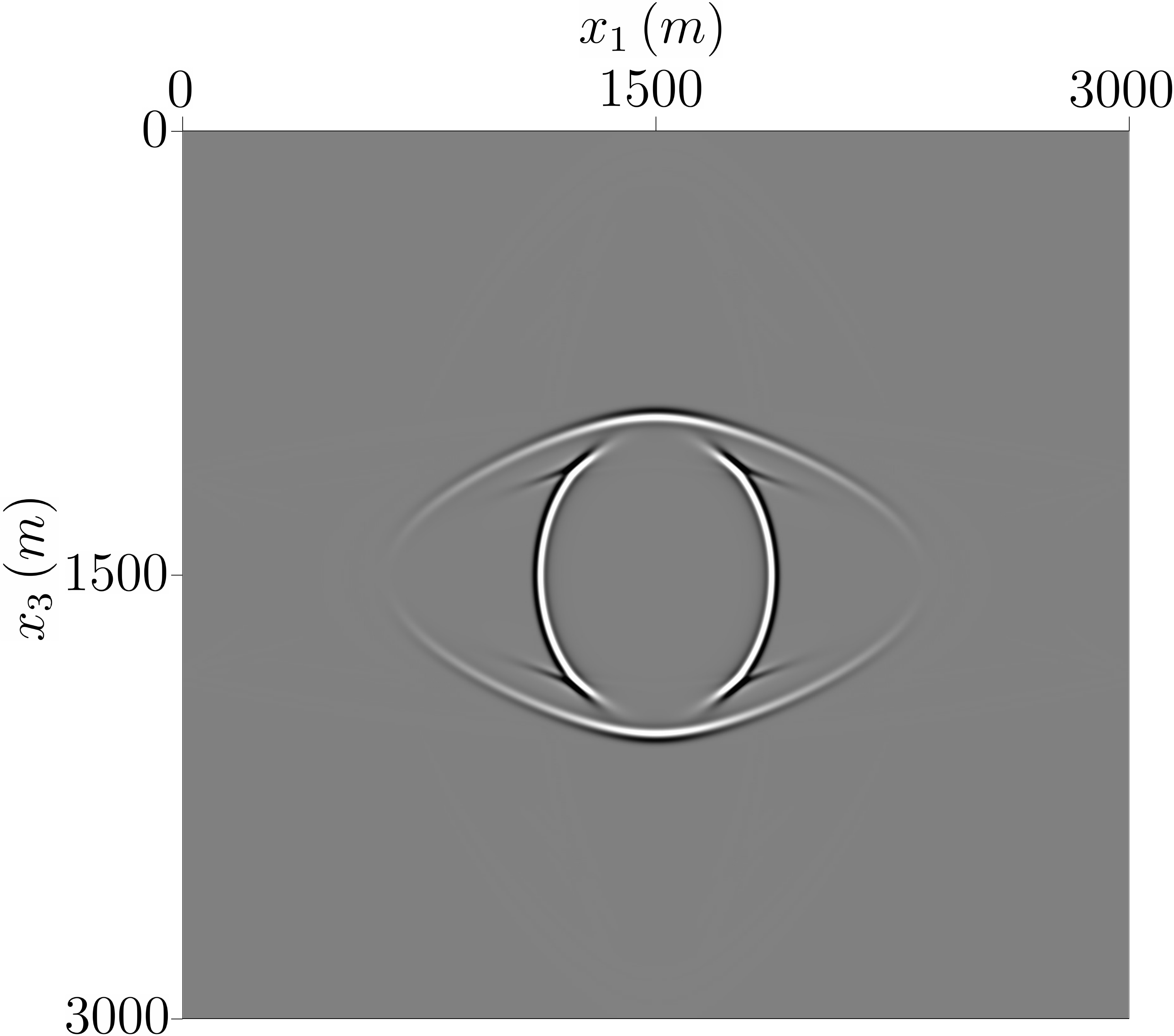}
            \caption[]%
            {{\footnotesize $u_3$ in equivalent medium}}    
            \label{fig:disp1sc}
        \end{subfigure}
        \quad
        \begin{subfigure}[b]{0.45\textwidth}   
            \centering 
            \includegraphics[width=\textwidth]{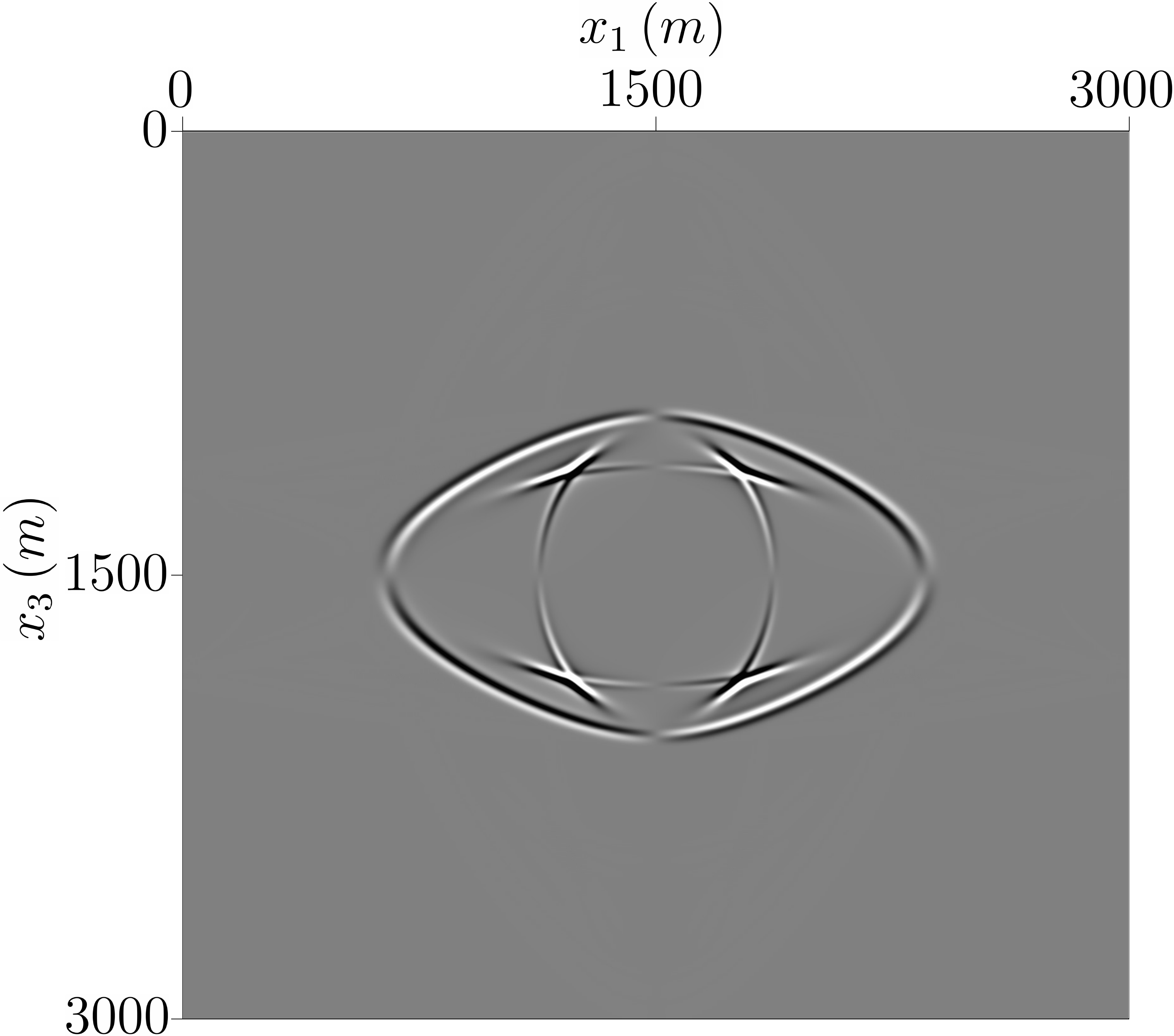}
            \caption[]%
            {{\small \footnotesize $u_1$ in equivalent medium}}    
            \label{fig:disp1sd}
        \end{subfigure}
        \caption[\small{Snapshots of displacement $u_3$ and $u_1$\,: PL and equivalent medium $I^*$}]
        {\small Snapshots of displacement $u_3$ and $u_1$ in PL and equivalent medium $I^*$ at time $t=0.3\, \rm{s}$} 
        \label{fig:disp1s}
    \end{figure}

    \begin{table}[!htbp]
    \renewcommand{\arraystretch}{1.1}
\begin{tabular}
{ccccccc}
\toprule
 & $I$ & $I^*$ & $II$ & $III$ & $IV$ & $V$\\
\cmidrule{1-7}
semb. &$99.9940$&$82.8138$&$99.9996$&$99.9992$&$99.9993$&$99.9988$\\
${\overline{g_2}}$&$0.0530$&$0.0530$&$3.41\times10^{-4}$&$0$&$3.44\times10^{-9}$&$4.58\times10^{-9}$\\
\bottomrule
\end{tabular}
\caption[\small{Semblances of signals propagating through layers and equivalent media}]{\small{Approximate values of semblances (in \%) of signals propagating through thin layers and equivalent media for cases $I$--$V$ discussed in the main text. The approximate values of averaged ${\overline{g_2}}$ are also presented.}}
\label{tab:semblance}
\end{table}
\begin{figure}[!htbp]
    \centering
    \includegraphics[width=0.5\textwidth]{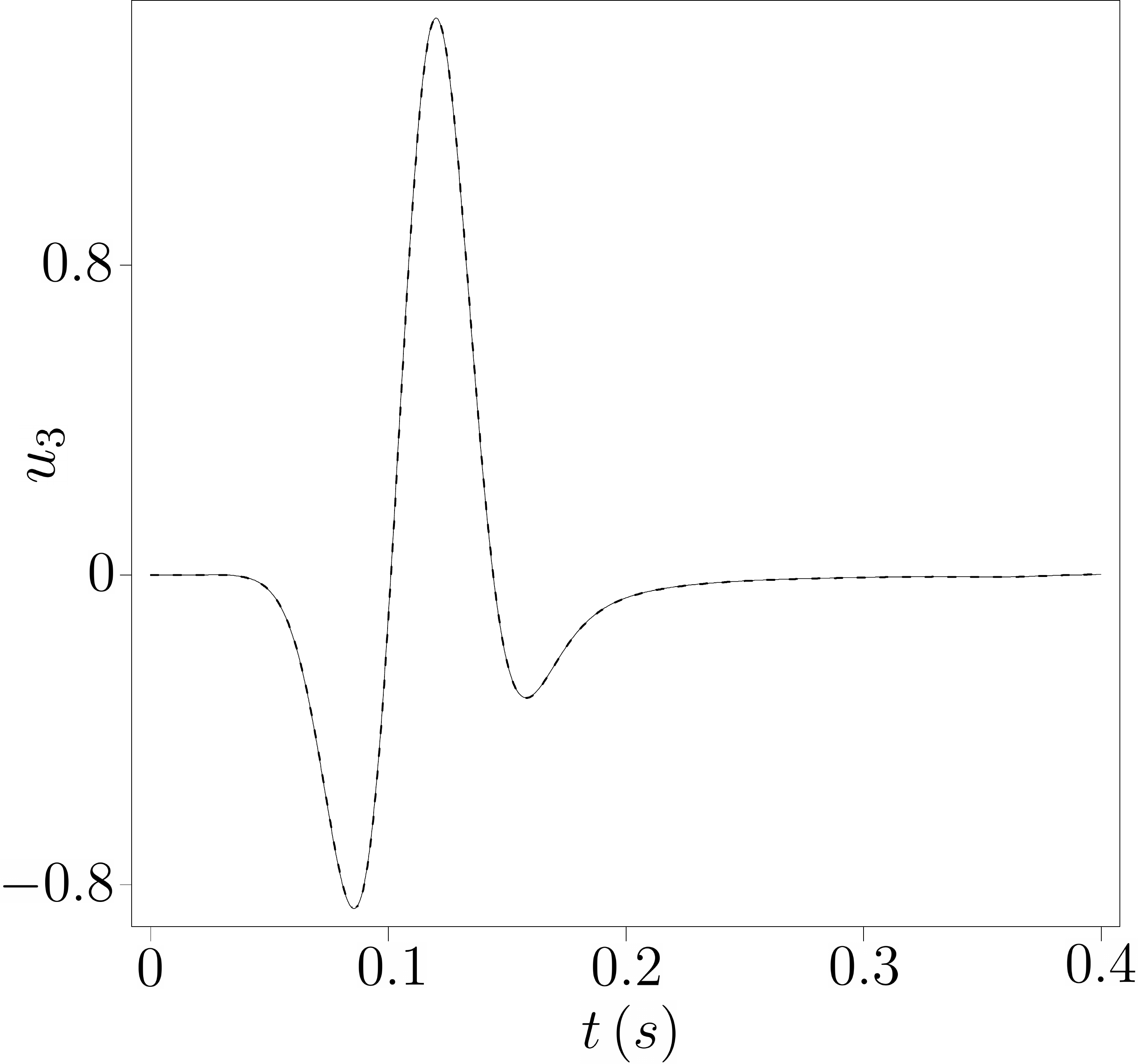}
    \caption[\small{Displacement $u_3$ recorded by the receiver: PL and equivalent medium $IV$}]{\small{Displacement $u_3$ recorded by the receiver. Signal in PL and equivalent medium $IV$ is denoted by dashed and solid line, respectively.}}
    \label{fig:rec4}
\end{figure}
\begin{figure}[!b]
        \centering
        \begin{subfigure}[b]{0.45\textwidth}
            \centering
            \includegraphics[width=\textwidth]{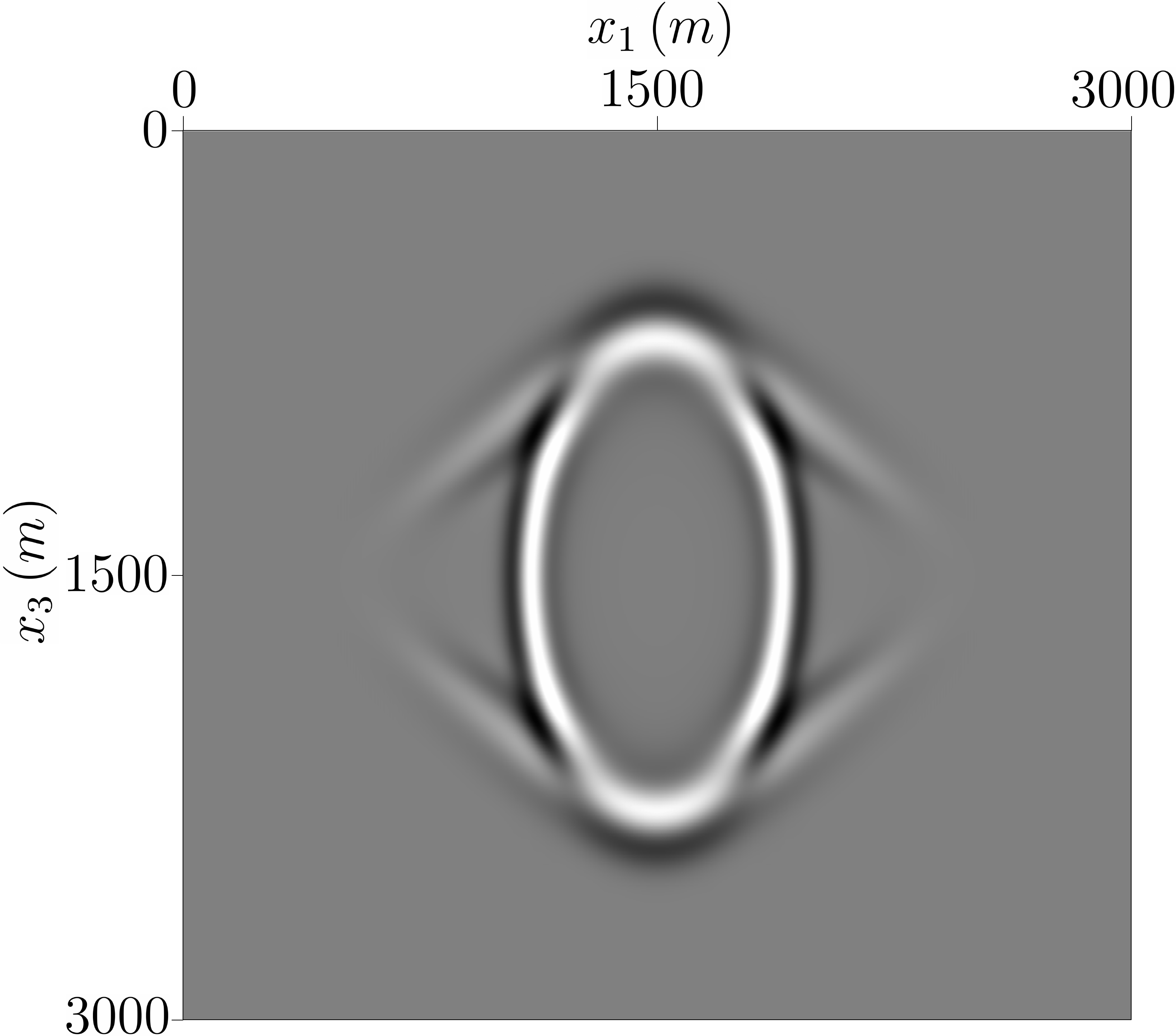}
            \caption%
            {{\footnotesize $u_3$ in PL medium}}    
            \label{fig:disp4a}
        \end{subfigure}
        \quad
        \begin{subfigure}[b]{0.45\textwidth}  
            \centering 
            \includegraphics[width=\textwidth]{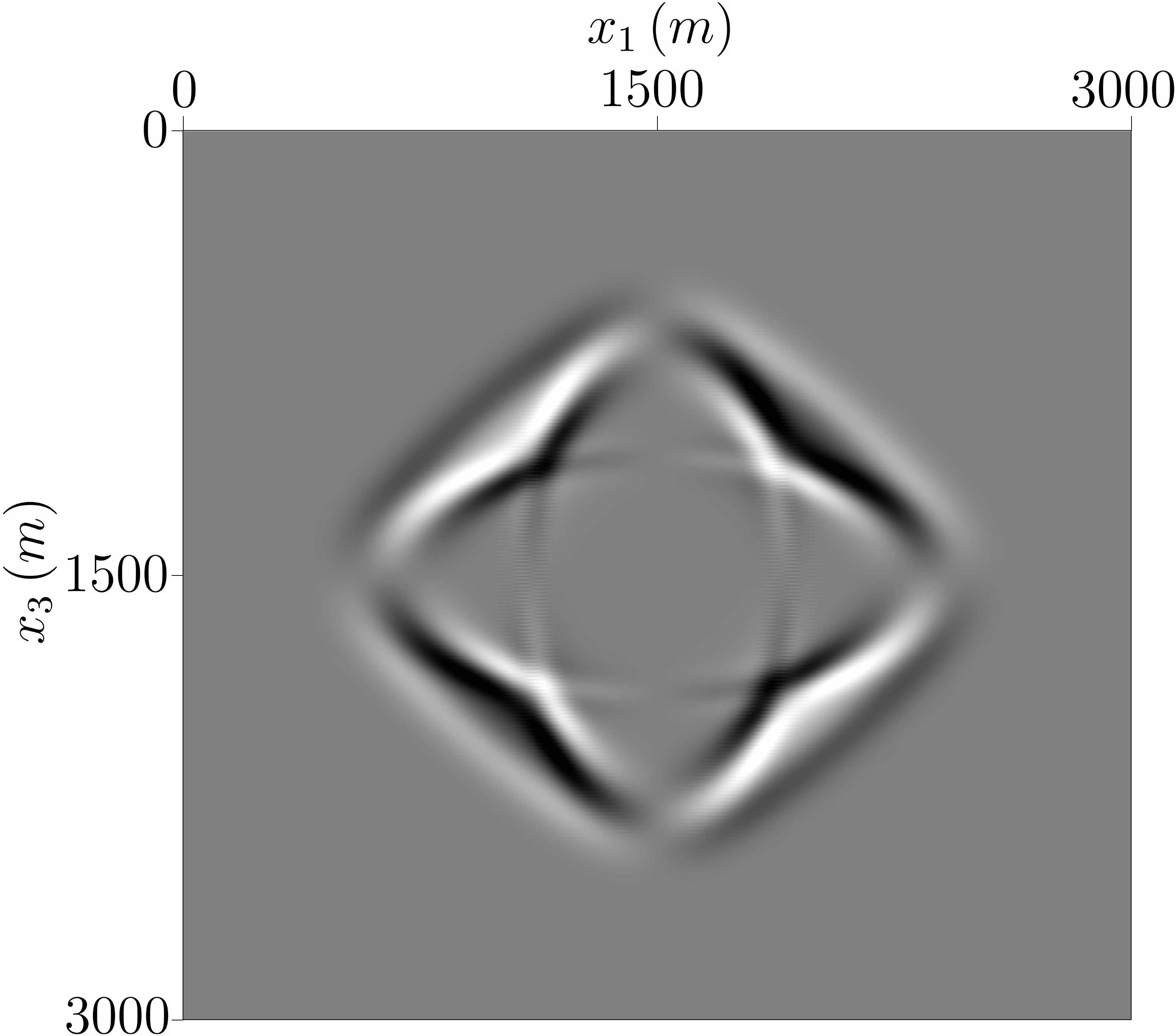}
            \caption[]%
            {{\footnotesize $u_1$ in PL medium}}    
            \label{fig:disp4b}
        \end{subfigure}
        \vskip\baselineskip
        \begin{subfigure}[b]{0.45\textwidth}   
            \centering 
            \includegraphics[width=\textwidth]{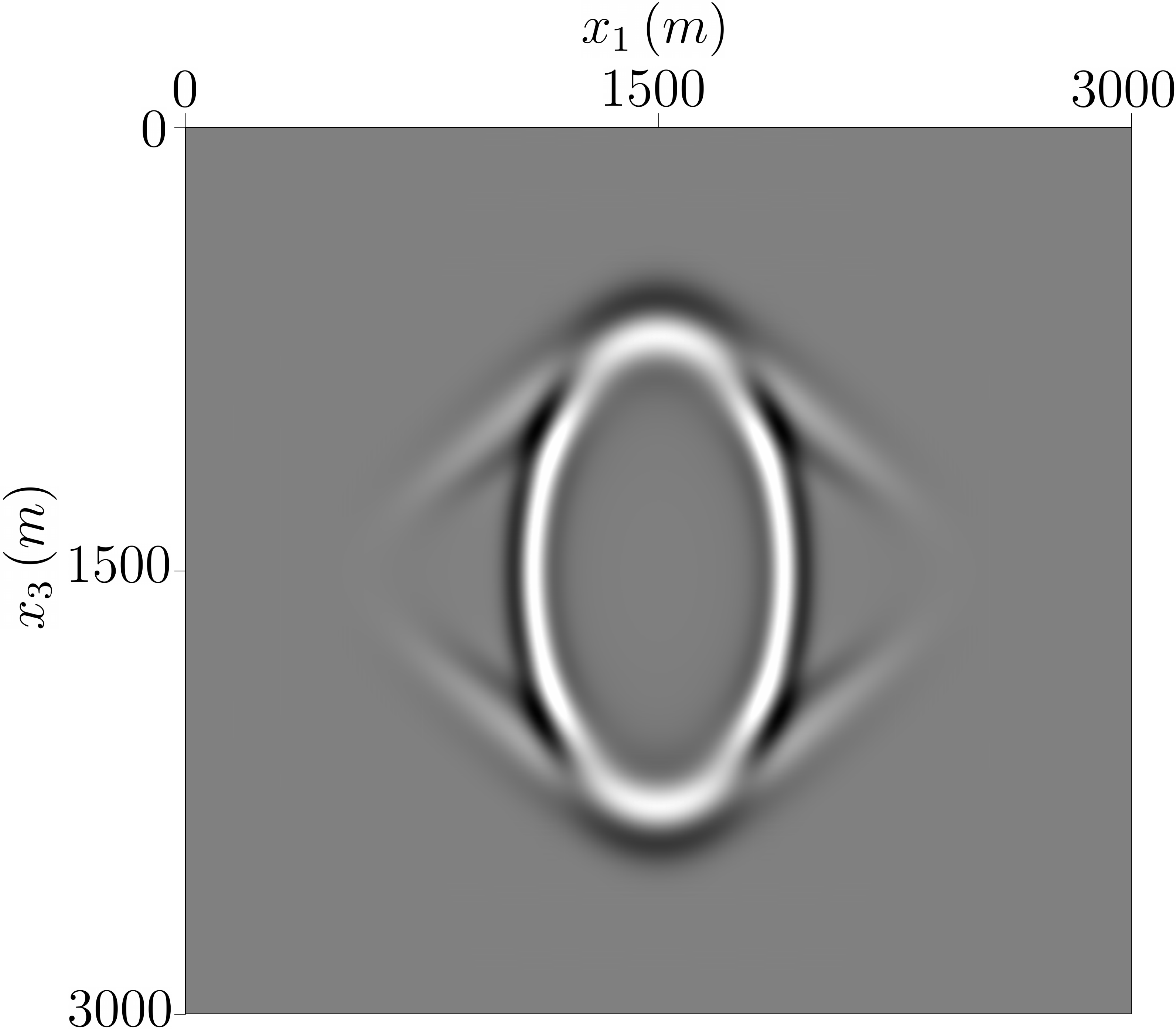}
            \caption[]%
            {{\footnotesize $u_3$ in equivalent medium}}    
            \label{fig:disp4c}
        \end{subfigure}
        \quad
        \begin{subfigure}[b]{0.45\textwidth}   
            \centering 
            \includegraphics[width=\textwidth]{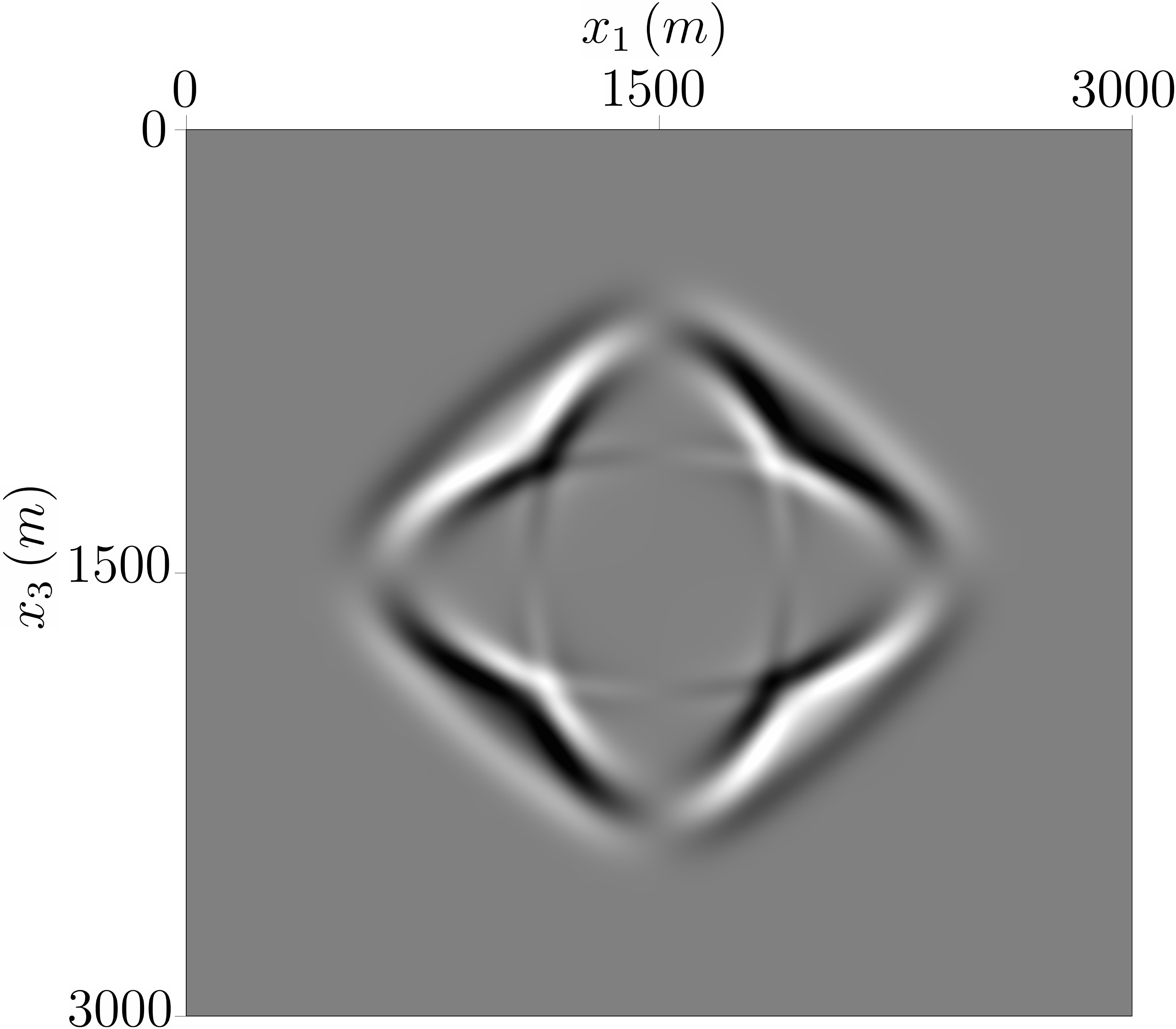}
            \caption[]%
            {{\small \footnotesize $u_1$ in equivalent medium}}    
            \label{fig:disp4d}
        \end{subfigure}
        \caption[\small{Snapshots of displacement $u_3$ and $u_1$\,: PL and equivalent medium $IV$}]
        {\small Snapshots of displacement $u_3$ and $u_1$ in PL and equivalent medium $IV$ at time $t=0.3\,\rm{s}$} 
        \label{fig:disp4}
    \end{figure}
\newpage
\section{Conclusions}
We focus on the case of product approximation that leads to inaccurate results. We discuss a possibility of its occurrence in physics, in general, and in applied seismology, in particular. We examine numerically the effect of such an inaccuracy on wave propagation in a medium obtained by the Backus average.

In Section~\ref{sec:th}, we present Table~\ref{tab:g} that consists of all the possibilities (up to monoclinic class) of rapidly-varying functions $g$\,.
Table~\ref{tab:gpos} indicates which $g$ may be negative and still obey the stability conditions.
In turn, negative $g$ (or positive, but low values of $g$) in certain layers may lead to the average $\overline{g}\approx0$\,, which makes the product approximation inaccurate.
As discussed in Section~\ref{sec:pois}, for isotropic, cubic, TI, and tetragonal symmetry classes, negative $g$ is tantamount to negative Poisson's ratio in some direction.
Based on the literature review, we show that there are numerous examples in which $\nu<0$ occurs in practice.
Thus, the problematic case of product approximation is likely to occur in real seismological cases, not as assumed previously~\citep{BosProductApprox}.
In general, the chances for negative, or low positive Poisson's ratio are larger if the rock is dry or gas-bearing, is quartz-rich, has numerous cracks and low porosity, occurs in a high-temperature or low-pressure environment.  

In Section~\ref{sec:num}, we perform several 2D numerical simulations of wave propagation in layered and equivalent media with $\overline{g}\approx0$\,.
Based on these examples, we conclude that the problematic case of product approximation that causes the Backus average to be inaccurate does not affect the wave propagation in a meaningful manner.
The product assumption occurs to be much less critical than the long-wave and thin layers assumption.

Please note that our numerical analysis is not entirely complete. 
We neither consider 3D examples, nor the cases of layers exhibiting generally-anisotropic or trigonal symmetry classes.
However, given our simulations, we expect that the influence of $\overline{g}\approx0$ on the wave propagation in equivalent medium obtained by the Backus average should also be marginal in these, low-symmetry or 3D examples. 
\section*{Acknowledgements}
We wish to thank Michael A. Slawinski and David Dalton for their comments.
The research was done in the context of The Geomechanics Project partially supported by the Natural Sciences and Engineering Research Council of Canada, grant 202259.
The author has no conflict of interests to declare.
\bibliographystyle{apa}
\bibliography{bibliography}

\begin{thebibliography}{}

\bibitem[\protect\astroncite{Backus}{1962}]{Backus}
Backus, G.~E. (1962).
\newblock Long-wave elastic anisotropy produced by horizontal layering.
\newblock {\em J. Geophys. Res.}, 67(11):4427--4440.

\bibitem[\protect\astroncite{Bakulin and Grechka}{2003}]{Bakulinetal}
Bakulin, A. and Grechka, V. (2003).
\newblock Effective anisotropy of layered media.
\newblock {\em Geophysics}, 68(6):1817--1821.

\bibitem[\protect\astroncite{Baughman et~al.}{1998}]{Baughman}
Baughman, R.~H., Stafstrom, S., Cui, C., and Dantas, S.~O. (1998).
\newblock Materials with {N}egative {C}ompressibilities in {O}ne or {M}ore
  {D}imensions.
\newblock {\em Science}, 279(5356):1522--1524.

\bibitem[\protect\astroncite{Beljankin and
  Petrov}{1938}]{CristobaliteSedimentary}
Beljankin, D.~S. and Petrov, V.~P. (1938).
\newblock Occurrence of cristobalite in a sedimentary rock.
\newblock {\em Am. Mineral.}, 23(3):153--155.

\bibitem[\protect\astroncite{Bos et~al.}{2017}]{Bosetal}
Bos, L., Dalton, D.~R., Slawinski, M.~A., and Stanoev, T. (2017).
\newblock On {B}ackus {A}verage for {G}enerally {A}nisotropic {L}ayers.
\newblock {\em J. Elast.}, 127(2):179--196.

\bibitem[\protect\astroncite{Bos et~al.}{2018}]{BosProductApprox}
Bos, L., Danek, T., Slawinski, M.~A., and Stanoev, T. (2018).
\newblock Statistical and {}numerical {C}onsiderations of {B}ackus-{A}verage
  {P}roduct {A}pproximation.
\newblock {\em J. Elast.}, 132(1):141--159.

\bibitem[\protect\astroncite{Calvert et~al.}{1977}]{CristobaliteCherts}
Calvert, S.~E., Burns, R.~G., Smith, J.~V., and Kempe, D. R.~C. (1977).
\newblock Mineralogy of {S}ilica {P}hases in {D}eep-{S}ea {C}herts and
  {P}orcelanites [and {D}isucssion].
\newblock {\em Phil. Trans. Roy. Soc. Lond. Math. Phys. Sci.},
  286(1336):239--252.

\bibitem[\protect\astroncite{Carcione et~al.}{1991}]{Carcione}
Carcione, J.~M., Kosloff, D., and Behle, A. (1991).
\newblock Long-wave anisotropy in stratified media: A numerical test.
\newblock {\em Geophysics}, 56(2):245--254.

\bibitem[\protect\astroncite{Castagna and Smith}{1994}]{CastagnaSmith}
Castagna, J.~P. and Smith, S.~W. (1994).
\newblock Comparison of {AVO} indicators: {A} modeling study.
\newblock {\em Geophysics}, 59(12):1849--1855.

\bibitem[\protect\astroncite{Coyner}{1984}]{Coyner}
Coyner, K.~B. (1984).
\newblock {\em Effects of stress, pore pressure, and pore fluids on bulk
  strain, velocity, and permeability in rocks}.
\newblock PhD thesis, Massachusetts {I}nstitute of {T}echnology.

\bibitem[\protect\astroncite{Damby et~al.}{2014}]{CristobaliteVolcanic}
Damby, D.~E., Llewellin, E.~W., Horwell, C.~J., Williamson, B.~J., Najorka, J.,
  Cressey, G., and Carpenter, M. (2014).
\newblock The $\alpha$---$\beta$ phase transition in volcanic cristobalite.
\newblock {\em J. Appl. Crystallogr.}, 47(4):1205--1215.

\bibitem[\protect\astroncite{Dvorkin et~al.}{1999}]{Dvorkin}
Dvorkin, J., Moos, D., Packwood, J.~L., and Nur, A.~M. (1999).
\newblock Identifying patchy saturation from well logs.
\newblock {\em Geophysics}, 64(6):1756--1759.

\bibitem[\protect\astroncite{Dziewo{\'n}ski and Anderson}{1981}]{Dziewonski}
Dziewo{\'n}ski, A.~M. and Anderson, D.~L. (1981).
\newblock Preliminary reference {E}arth model.
\newblock {\em Phys. Earth Planet. Inter.}, 25(4):297--356.

\bibitem[\protect\astroncite{Emery and Stewart}{2006}]{Crewes}
Emery, D.~J. and Stewart, R.~R. (2006).
\newblock Using {VP}/{VS} to explore for sandstone reservoirs: well log and
  synthetic seismograms from the {J}eanne d'{A}rc basin, offshore
  {N}ewfoundland.
\newblock {\em CREWES Research Report}, 18(1):1--20.

\bibitem[\protect\astroncite{Fay et~al.}{2012}]{FayEtAl2012}
Fay, M., Larter, S., Bennett, B., Snowdon, L., and Fowler, M. (2012).
\newblock {P}etroleum {S}ystems and {Q}uantitative {O}il {C}hemometric {M}odels
  for {H}eavy {O}ils in {A}lberta and {S}askatchewan to {C}haracterize
  {O}il-{S}ource {C}orrelations.
\newblock {\em GeoConvention 2012:Vision}.

\bibitem[\protect\astroncite{Fomel et~al.}{2013}]{Fomel}
Fomel, S., Sava, P., Vlad, I., .Liu, Y., and Bashkardin, V. (2013).
\newblock Madagascar: {O}pen-source software project for multidimensional data
  analysis and reproducible computational experiments.
\newblock {\em J. Open Res. Softw.}, 1(1):e8.

\bibitem[\protect\astroncite{Freund}{1992}]{Freund}
Freund, D. (1992).
\newblock Ultrasonic compressional and shear velocities in dry clastic rocks as
  a function of porosity, clay content, and confining pressure.
\newblock {\em Geophys. J. Int.}, 108(1):125--135.

\bibitem[\protect\astroncite{Goodway}{2001}]{Goodway}
Goodway, B. (2001).
\newblock A{VO} and {L}am\'e constants for rock parameterization and fluid
  detection.
\newblock {\em CSEG Recorder}, 26(6):39--60.

\bibitem[\protect\astroncite{Gregory}{1976}]{Gregory}
Gregory, A.~R. (1976).
\newblock Fluid saturation effect on dynamic elastic properties of sedimentary
  rocks.
\newblock {\em Geophysics}, 41(5):895--921.

\bibitem[\protect\astroncite{Grima et~al.}{2007}]{Zeolite2}
Grima, J.~N., Gatt, R., Zammit, V., Williams, J.~J., Evans, K.~E., Alderson,
  A., and Walton, R.~I. (2007).
\newblock Natrolite: a zeolite with negative {P}oisson's ratios.
\newblock {\em J. Appl. Phys.}, 101(8):086102.

\bibitem[\protect\astroncite{Grima et~al.}{2000}]{Zeolite}
Grima, J.~N., Jackson, R., Alderson, A., and Evans, K.~E. (2000).
\newblock Do {Z}eolites {H}ave {N}egative {P}oisson's {R}atios?
\newblock {\em Advanced Materials}, 12(24):1912--1918.

\bibitem[\protect\astroncite{Han}{1986}]{Han}
Han, D.-H. (1986).
\newblock {\em Effects of porosity and clay content on acoustic properties of
  sandstones and unconsolidated sediments:}.
\newblock PhD thesis, Stanford University.

\bibitem[\protect\astroncite{Hay}{1986}]{ZeoliteOccur}
Hay, R.~L. (1986).
\newblock Geologic {O}ccurrence of {Z}eolites and {S}ome {A}ssociated
  {M}inerals.
\newblock {\em Stud. Surf. Sci. Catal.}, 28(1):35--40.

\bibitem[\protect\astroncite{Hayes et~al.}{1994}]{HayesEtAl1994}
Hayes, B. J.~R., Christopher, J.~R., Rosenthal, L., Los, G., McKercher, B.,
  Minken, D., Tremblay, Y.~M., and Fennel, J. (1994).
\newblock Cretaceous {M}annville {G}roup of the {W}estern {C}anada
  {S}edimentary {B}asin; in {G}eological {A}tlas of the {W}estern {C}anada
  {S}edimentary {B}asin.
\newblock {\em Canadian Society of Petroleum Geologists and Alberta Research
  Council}, 19(URL:
  https://ags.aer.ca/publications/chapter-19-cretaceous-mannville-group.htm).

\bibitem[\protect\astroncite{Helbig}{1984}]{Helbig84}
Helbig, K. (1984).
\newblock Anisotropy and dispersion in periodically layered media.
\newblock {\em Geophysics}, 49(4):364--373.

\bibitem[\protect\astroncite{Helbig}{1994}]{Helbig}
Helbig, K. (1994).
\newblock {\em Foundations of anisotropy for exploration seismics}.
\newblock Pergamon Press, 1st edition.

\bibitem[\protect\astroncite{Hommand-Etienne and Houpert}{1989}]{Houpert}
Hommand-Etienne, F. and Houpert, R. (1989).
\newblock Thermally {I}nduced {M}icrocracking in {G}ranites: {C}haracterization
  and {A}nalysis.
\newblock {\em Int. J. Rock Mech. Min. Sci. Geomech. Abstr.}, 26(2):125--134.

\bibitem[\protect\astroncite{Ji et~al.}{2018}]{JiEtAl2018}
Ji, S., Li, L., Motra, H.~B., Wuttke, F., Sun, S., Michibayashi, K., and
  Salisbury, M.~H. (2018).
\newblock Poisson's {R}atio and {A}uxetic {P}roperties of {N}atural {R}ocks.
\newblock {\em J. Geophys. Res. Solid Earth}, 123(2):1161--1185.

\bibitem[\protect\astroncite{Ji et~al.}{2010}]{JiEtAl2010}
Ji, S., Sun, S., Wang, Q., and Marcotte, D. (2010).
\newblock Lam{\'e} parameters of common rocks in the earth's crust and upper
  mantle.
\newblock {\em J. Geophys. Res. Solid Earth}, 115(B6).

\bibitem[\protect\astroncite{Ji et~al.}{2019}]{JiEtAl2019}
Ji, S., Wang, Q., and Li, L. (2019).
\newblock Seismic velocities, {P}oisson's ratios and potential auxetic behavior
  of volcanic rocks.
\newblock {\em Tectonophysics}, 766(1):270--282.

\bibitem[\protect\astroncite{Jizba}{1991}]{Jizba}
Jizba, D. (1991).
\newblock {\em Mechanical and acoustical properties of sandstones and shales}.
\newblock PhD thesis, Stanford University.

\bibitem[\protect\astroncite{Kudela and Stanoev}{2018}]{KudelaStanoev}
Kudela, I. and Stanoev, T. (2018).
\newblock On possibile issues of {B}ackus average.
\newblock {\em arXiv:1804.01917 [physics.geo-ph]}.

\bibitem[\protect\astroncite{Lakes}{2017}]{Lakes}
Lakes, R.~S. (2017).
\newblock Negative-{P}oisson's-{R}atio {M}aterials: {A}xetic {S}olids.
\newblock {\em Annu. Rev. Mater. Res.}, 47(1):63--81.

\bibitem[\protect\astroncite{Liner and Fei}{2007}]{LinerFei}
Liner, C. and Fei, T. (2007).
\newblock The backus number.
\newblock {\em Lead. Edge}, 26(4):420--426.

\bibitem[\protect\astroncite{Mavko and Jizba}{1994}]{MavkoJizba}
Mavko, G. and Jizba, D. (1994).
\newblock The relation between seismic {P}- and {S}-wave velocity dispersion in
  saturated rocks.
\newblock {\em Geophysics}, 59(1):87--92.

\bibitem[\protect\astroncite{Mavko et~al.}{2009}]{MavkoEtAl2009}
Mavko, G., Mukerji, T., and Dvorkin, J. (2009).
\newblock {\em The rock physics handbook}.
\newblock Cambridge, 2nd edition.

\bibitem[\protect\astroncite{McKnight et~al.}{2008}]{Quartz}
McKnight, R. E.~A., Moxon, T., Buckley, A., Taylor, P.~A., Darling, T.~W., and
  Carpenter, M.~A. (2008).
\newblock Grain size dependence of elastic anomalies accompanying the
  $\alpha$--$\beta$ phase transition in polycrystalline quartz.
\newblock {\em J. Phys. Condens. Matter}, 20(7):075229.

\bibitem[\protect\astroncite{Mizota et~al.}{1987}]{CristobaliteSoil}
Mizota, C., Toh, N., and Matsuhisa, Y. (1987).
\newblock Origin of cristobalite in soils derived from volcanic ash in
  temperate and tropical regions.
\newblock {\em Geoderma}, 39(4):323--330.

\bibitem[\protect\astroncite{Mouchat and Coudert}{2014}]{Mouchat}
Mouchat, F. and Coudert, F.~X. (2014).
\newblock Necessary and {S}ufficient {E}lastic {S}tability {C}onditions in
  {V}arious {C}rystal {S}ystems.
\newblock {\em Phys. Rev. B}, 90(22):1--4.

\bibitem[\protect\astroncite{Nur and Simmons}{1969}]{Nur}
Nur, A. and Simmons, G. (1969).
\newblock The effect of saturation on velocity in low porosity rocks.
\newblock {\em Earth Planet. Sci. Lett.}, 7(2):183--193.

\bibitem[\protect\astroncite{Sanchez-Valle et~al.}{2008}]{Zeolite3}
Sanchez-Valle, C., Lethbridge, Z. A.~D., Sinogeikin, S.~V., Williams, J.~J.,
  Walton, R.~I., Evans, K.~E., and Bass, J.~D. (2008).
\newblock Negative {P}oisson's ratios in siliceous zeolite {MFI}-silicalite.
\newblock {\em J. Chem. Phys.}, 128(18):184503.

\bibitem[\protect\astroncite{Schoenberg and Muir}{1989}]{SchMuir}
Schoenberg, M. and Muir, F. (1989).
\newblock A calculus for finely layered anisotropic media.
\newblock {\em Geophysics}, 54(5):581--589.

\bibitem[\protect\astroncite{Slawinski}{2015}]{SlawinskiRed}
Slawinski, M.~A. (2015).
\newblock {\em Waves and rays in elastic continua}.
\newblock World Scientific, 3rd edition.

\bibitem[\protect\astroncite{Slawinski}{2018}]{SlawinskiGreen}
Slawinski, M.~A. (2018).
\newblock {\em Waves and rays in seismology: {A}nswers to unasked questions}.
\newblock World Scientific, 2nd edition.

\bibitem[\protect\astroncite{Yeganeh-Haeri et~al.}{1992}]{Cristobalite}
Yeganeh-Haeri, A., Weidner, D.~J., and Parise, J.~B. (1992).
\newblock Elasticity of $\alpha$-{C}ristobalite: {A} {S}ilicon {D}ioxide with a
  {N}egative {P}oisson's {R}atio.
\newblock {\em Science}, 257(5070):650--652.

\bibitem[\protect\astroncite{Zaitsev et~al.}{2017}]{Zaitsev}
Zaitsev, V.~Y., Radostin, A.~V., Pasternak, E., and Dyskin, A. (2017).
\newblock Extracting real-crack properties from non-linear elastic behaviour of
  rocks: abundance of cracks with dominating normal compliance and rocks with
  negative {P}oisson ratios.
\newblock {\em Nonlinear Process. Geophys.}, 24(3):543--551.

\end{thebibliography}
\appendix
\addcontentsline{toc}{section}{Appendices}
\section{Backus average for anisotropic layers}\label{ch4:ap1}
Let us write the strain-stress relations in two dimensions ($x_3x_1$-plane), namely,
\begin{equation}\label{eq:ap1}
\sigma_{11}=C_{11}\varepsilon_{11}+C_{13}\varepsilon_{33}\,,
\end{equation}
\begin{equation}
\sigma_{33}=C_{13}\varepsilon_{11}+C_{33}\varepsilon_{33}\,,
\end{equation}
\begin{equation}\label{eq:ap3}
\sigma_{13}=2C_{55}\varepsilon_{13}\,,
\end{equation}
which are the relations valid for the monoclinic, orthotropic, tetragonal, and TI symmetry class.
Upon a rearrangement, we get
\begin{equation}
\sigma_{11}=\left(C_{11}-\frac{C_{13}^2}{C_{33}}\right)\varepsilon_{11}+\left(\frac{C_{13}}{C_{33}}\right)\sigma_{33}\,,
\end{equation}
\begin{equation}
\varepsilon_{33}=-\left(\frac{C_{13}}{C_{33}}\right)\varepsilon_{11}
+\left(\frac{1}{C_{33}}\right)\sigma_{33}\,,
\end{equation}
\begin{equation}
\frac{\partial{u_1}}{\partial{x_3}}=\left(\frac{1}{C_{55}}\right)\sigma_{13}-\frac{\partial{u_3}}{\partial{x_1}}\,.
\end{equation}
Let us treat the above equations as the stress-strain relations that correspond to many individual constituents that we want to average.
To perform the averaging process, we use the three following properties:
the average of the sum is a sum of the average, the average of the derivative is a derivative of the average, and, finally, the product approximation.
We obtain
\begin{equation}\label{eq:ap4}
\overline{\sigma_{11}}=\left[\overline{\left(C_{11}-\frac{C_{13}^2}{C_{33}}\right)}+\overline{\left(\frac{C_{13}}{C_{33}}\right)}^2\overline{\left(\frac{1}{C_{33}}\right)}^{-1}\right]\overline{\varepsilon_{11}}+\overline{\left(\frac{C_{13}}{C_{33}}\right)}\,\overline{\left(\frac{1}{C_{33}}\right)}^{-1} \overline{\varepsilon_{33}}\,,
\end{equation}
\begin{equation}
\overline{\sigma_{33}}=\overline{\left(\frac{C_{13}}{C_{33}}\right)}\,\overline{\left(\frac{1}{C_{33}}\right)}^{-1}\overline{\varepsilon_{11}}+\overline{\left(\frac{1}{C_{33}}\right)}^{-1} \overline{\varepsilon_{33}}\,,
\end{equation}
\begin{equation}\label{eq:ap6}
\overline{\sigma_{13}}=\overline{\left(\frac{1}{C_{55}}\right)}^{-1}2\,\overline{\varepsilon_{13}}\,.
\end{equation}
Comparing equations~(\ref{eq:ap4})--(\ref{eq:ap6}) with equations~(\ref{eq:ap1})--(\ref{eq:ap3}), we see that the equivalent elasticity parameters are equal to
\begin{equation}
C_{11}^{\rm {\overline{eq}}}=\overline{\left(C_{11}-\frac{C_{13}^2}{C_{33}}\right)}+\overline{\left(\frac{C_{13}}{C_{33}}\right)}^2\overline{\left(\frac{1}{C_{33}}\right)}^{-1}\,,
\end{equation}
\begin{equation}
C_{13}^{\rm {\overline{eq}}}=\overline{\left(\frac{C_{13}}{C_{33}}\right)}\,\overline{\left(\frac{1}{C_{33}}\right)}^{-1} \,,
\end{equation}
\begin{equation}
C_{33}^{\rm {\overline{eq}}}=\overline{\left(\frac{1}{C_{33}}\right)}^{-1}\,,
\end{equation}
\begin{equation}
C_{55}^{\rm {\overline{eq}}}=\overline{\left(\frac{1}{C_{55}}\right)}^{-1}\,,
\end{equation}
and the resulting medium is either monoclinic, orthotropic, tetragonal, or TI.
If layers have cubic symmetry, then $C_{33}=C_{11}\,$.
In such a case, $C_{33}^{\rm{\overline{eq}}}\neq C_{11}^{\rm{\overline{eq}}}$\,, which means that the equivalent medium is not cubic.
To understand what is the symmetry class of the medium equivalent to cubic layers, we need to derive the analogous equivalent parameters, but for 3D case.
Upon an analogous procedure, shown above, we get
 \begin{equation}
C_{11}^{\rm {\overline{eq}}}=\overline{\left(C_{11}-\frac{C_{13}^2}{C_{11}}\right)}+\overline{\left(\frac{C_{13}}{C_{11}}\right)}^2\overline{\left(\frac{1}{C_{11}}\right)}^{-1}\,,
\end{equation}
\begin{equation}
C_{12}^{\rm {\overline{eq}}}=\overline{\left(C_{13}-\frac{C_{13}^2}{C_{11}}\right)}+\overline{\left(\frac{C_{13}}{C_{11}}\right)}^2\overline{\left(\frac{1}{C_{11}}\right)}^{-1}\,,
\end{equation}
\begin{equation}
C_{13}^{\rm {\overline{eq}}}=\overline{\left(\frac{C_{13}}{C_{11}}\right)}\,\overline{\left(\frac{1}{C_{11}}\right)}^{-1} \,,
\end{equation}
\begin{equation}
C_{33}^{\rm {\overline{eq}}}=\overline{\left(\frac{1}{C_{11}}\right)}^{-1}\,,
\end{equation}
\begin{equation}
C_{55}^{\rm {\overline{eq}}}=\overline{\left(\frac{1}{C_{55}}\right)}^{-1}\,,
\end{equation}
\begin{equation}
C_{66}^{\rm {\overline{eq}}}=\overline{C_{55}}\,,
\end{equation}
where $C_{11}^{\rm {\overline{eq}}}=C_{22}^{\rm {\overline{eq}}}$\,, $C_{13}^{\rm {\overline{eq}}}=C_{23}^{\rm {\overline{eq}}}$\,, and $C_{55}^{\rm {\overline{eq}}}=C_{44}^{\rm {\overline{eq}}}$\,.
The equivalent medium has six independent elasticity parameters and exhibits the tetragonal symmetry class.

\section{Backus procedure for a trigonal tensor}\label{sec:appendix}
First, we write the stress-strain relations in a trigonal medium (expressed in a natural coordinate system) as
\begin{equation}\label{eq1b}
\sigma_{11}=C_{11}\varepsilon_{11}+C_{12}\varepsilon_{22}+C_{13}\varepsilon_{33}+C_{15}\frac{\partial u_1}{\partial x_3}+C_{15}\frac{\partial u_3}{\partial x_1}\,,
\end{equation}
\begin{equation}\label{eq2b}
\sigma_{22}=C_{12}\varepsilon_{11}+C_{11}\varepsilon_{22}+C_{13}\varepsilon_{33}-C_{15}\frac{\partial u_1}{\partial x_3}-C_{15}\frac{\partial u_3}{\partial x_1}\,,
\end{equation}
\begin{equation}\label{eq3b}
\sigma_{33}=C_{13}\varepsilon_{11}+C_{13}\varepsilon_{22}+C_{33}\varepsilon_{33}\,,
\end{equation}
\begin{equation}\label{eq4b}
\sigma_{23}=C_{44}\frac{\partial{u_2}}{\partial{x_3}}+C_{44}\frac{\partial{u_3}}{\partial{x_2}}-2C_{15}\varepsilon_{12}\,,
\end{equation}
\begin{equation}\label{eq5b}
\sigma_{13}=C_{44}\frac{\partial{u_1}}{\partial{x_3}}+C_{44}\frac{\partial{u_3}}{\partial{x_1}}+C_{15}\varepsilon_{11}-C_{15}\varepsilon_{22}\,,
\end{equation}
\begin{equation}\label{eq6b}
\sigma_{12}=(C_{11}-C_{12})\varepsilon_{12}-C_{15}\frac{\partial{u_2}}{\partial{x_3}}-C_{15}\frac{\partial{u_3}}{\partial{x_2}}\,.
\end{equation}
We can directly rewrite equations~(\ref{eq3b})--(\ref{eq5b}) in a manner that the nearly-constant stresses and strains are on the right-hand side, whereas the sole varying function of displacements is on the left-hand side. 
We get,
\begin{equation}\label{eq7b}
\varepsilon_{33}=\sigma_{33}\underbrace{ \left(\frac{1}{C_{33}}\right) }_{\text{$g_1$}}-\underbrace{\left(\frac{C_{13}}{C_{33}}\right)}_\text{$g_2$}\varepsilon_{11}-\underbrace{\left(\frac{C_{13}}{C_{33}}\right)}_\text{$g_3$}\varepsilon_{22}\,,
\end{equation}
\begin{equation}\label{eq8b}
\frac{\partial{u_2}}{\partial{x_3}}=\sigma_{23}\underbrace{\left(\frac{1}{C_{44}}\right)}_\text{$g_4$}-\frac{\partial{u_3}}{\partial{x_2}}-\underbrace{\left(\frac{C_{15}}{C_{44}}\right)}_\text{$g_{t}$}2\varepsilon_{12}\,,
\end{equation}
\begin{equation}\label{eq9b}
\frac{\partial{u_1}}{\partial{x_3}}=\sigma_{13}\underbrace{\left(\frac{1}{C_{44}}\right)}_\text{$g_5$}-\frac{\partial{u_3}}{\partial{x_1}}-\left(\frac{C_{15}}{C_{44}}\right)\varepsilon_{11}+\left(\frac{C_{15}}{C_{44}}\right)\varepsilon_{22}\,.
\end{equation}
Now, we insert the right-hand side of equation~(\ref{eq7b}) and (\ref{eq9b}) into equations~(\ref{eq1b}) and (\ref{eq2b}).
Also, we insert the right-hand side of~(\ref{eq8b}) into~(\ref{eq6b}).
Upon simple calculations, we obtain
\begin{equation}
\sigma_{11}= \sigma_{33}\left(\frac{C_{13}}{C_{33}}\right)
+
\sigma_{13}\left(\frac{C_{15}}{C_{44}}\right)
+
\underbrace{\left(C_{11}-\frac{C_{13}^2}{C_{33}}-\frac{C_{15}^2}{C_{44}}\right)}_\text{$g_6$}\varepsilon_{11}
+
\underbrace{\left(C_{12}-\frac{C_{13}^2}{C_{33}}+\frac{C_{15}^2}{C_{44}}\right)}_\text{$g_7$}\varepsilon_{22}
\,,
\end{equation}
\begin{equation}
\sigma_{22}=
 \sigma_{33}\left(\frac{C_{13}}{C_{33}}\right)
-
\sigma_{13}\left(\frac{C_{15}}{C_{44}}\right)
+
\left(C_{12}-\frac{C_{13}^2}{C_{33}}+\frac{C_{15}^2}{C_{44}}\right)\varepsilon_{11}
+
\underbrace{\left(C_{11}-\frac{C_{13}^2}{C_{33}}-\frac{C_{15}^2}{C_{44}}\right)}_\text{$g_8$}\varepsilon_{22}
\,,
\end{equation}
\begin{equation}\label{eq12b}
\sigma_{12}=
-\sigma_{23}\left(\frac{C_{15}}{C_{44}}\right)
-
\underbrace{\left(\frac{C_{11}-C_{12}}{2}-\frac{C_{15}^2}{C_{44}}\right)}_\text{$g_9$}2\varepsilon_{12}
\,.
\end{equation}
Terms in parenthesis in equations~(\ref{eq7b})--(\ref{eq12b}) correspond to various $g$\,; we denote them as $g_i$ or $g_{t}$\,. 
We notice that in case of trigonal symmetry, $g_2=g_3$\,, $g_4=g_5$\,, and $g_6=g_8$\,.
\section{Relation between $n_1$\,, $n_2$\,, $g^{\rm ort}_2$\,, and $g^{\rm ort}_3$}\label{ap2}
\begin{lemma}
If numerators of Poisson's ratios $n_1>0$ and $n_2>0$\,, then the stability conditions for orthotropic media do not allow $g_2^{\rm ort}<0$ and $g_3^{\rm ort}<0$\,.
\end{lemma}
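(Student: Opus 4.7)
The plan is to argue by contradiction, using the two numerator inequalities together with the orthotropic stability condition $C_{11}C_{22}\geq C_{12}^{2}$ from~(\ref{cond:ort0}). First I would rewrite the hypothesis in terms of the entries of the elasticity matrix. Since $C_{33}>0$ by stability, $g_2^{\rm ort}<0$ and $g_3^{\rm ort}<0$ are equivalent to $C_{13}<0$ and $C_{23}<0$. So the claim reduces to: $n_1>0$, $n_2>0$, $C_{13}<0$, $C_{23}<0$ cannot all hold simultaneously under the orthotropic stability conditions.

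Next I would exploit the signs. Recall $n_1=C_{13}C_{22}-C_{12}C_{23}$. Because $C_{22}>0$ and $C_{13}<0$, the first term is negative; so $n_1>0$ forces $C_{12}C_{23}<C_{13}C_{22}<0$, and since $C_{23}<0$ this forces $C_{12}>0$ together with the quantitative bound
\begin{equation}
C_{12}\,|C_{23}|\;>\;|C_{13}|\,C_{22}\,.
\end{equation}
Applying the same argument to $n_2=C_{23}C_{11}-C_{12}C_{13}>0$, and using $C_{11}>0$, $C_{23}<0$, $C_{13}<0$, yields
\begin{equation}
C_{12}\,|C_{13}|\;>\;|C_{23}|\,C_{11}\,.
\end{equation}
Both right-hand sides are strictly positive, so both inequalities are between strictly positive numbers.

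The final step is to multiply these two inequalities. The positive factors $|C_{13}|\,|C_{23}|$ cancel between the two sides, and what remains is
\begin{equation}
C_{12}^{\,2}\;>\;C_{11}C_{22}\,,
\end{equation}
which directly contradicts the second inequality in~(\ref{cond:ort0}). Hence no orthotropic stable tensor can realize $n_1>0$, $n_2>0$, $g_2^{\rm ort}<0$, and $g_3^{\rm ort}<0$ at once, which proves the lemma.

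There is essentially no technical obstacle here; the only thing to be careful about is the sign bookkeeping when translating strict inequalities on signed quantities into inequalities between absolute values, so that the multiplication step is valid (i.e.\ multiplying inequalities between positive numbers). The stability conditions on the diagonal entries $C_{11},C_{22},C_{33}>0$ are used throughout, and the key stability ingredient that closes the argument is the $2\times 2$ principal minor condition $C_{11}C_{22}\geq C_{12}^{2}$.
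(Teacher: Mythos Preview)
Your proof is correct and follows essentially the same route as the paper: both arguments reduce the hypotheses to $C_{13}<0$, $C_{23}<0$, force $C_{12}>0$ from $n_1>0$, and then combine the two numerator inequalities to obtain $C_{12}^{2}>C_{11}C_{22}$, contradicting~(\ref{cond:ort0}). The only cosmetic difference is that the paper substitutes one inequality into the other whereas you multiply the two absolute-value bounds directly; the content is identical.
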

\begin{proof}
Consider $n_1>0$\,, namely,
\begin{equation}\label{ap:n1}
C_{13}C_{22}-C_{12}C_{23}>0\,.
\end{equation}
Let us assume that
$g_2^{\rm ort}<0$ and $g_3^{\rm ort}<0$\,.
Since, according to stability conditions, $C_{33}\geq0$\,, the above assumption is tantamount to $C_{13}<0$ and $C_{23}<0\,$.
We also know that $C_{22}\geq0$\,, thus, to satisfy expression~(\ref{ap:n1}) $C_{12}$ must be positive. 
Therefore, we can write
\begin{equation}\label{ap:n1b}
\frac{C_{13}C_{22}}{C_{12}}>C_{23}\,.
\end{equation}
Consider $n_2>0$\,, namely,
\begin{equation}\label{ap:n1}
C_{23}C_{11}>C_{12}C_{13}\,.
\end{equation}
Both sides are negative, where $C_{11}$ and $C_{12}$ must be positive.
This inequality allows us to insert some larger value in the place of $C_{23}$\,.
If we insert the left-hand side of inequality~(\ref{ap:n1b}), we get
\begin{equation}
\frac{C_{13}C_{11}C_{22}}{C_{12}}>C_{12}C_{13}\,.
\end{equation}
Since $C_{13}$ is assumed to be negative and $C_{12}$ must be negative, we obtain
\begin{equation}
C_{11}C_{22}<C_{12}^2\,,
\end{equation}
which is not allowed by the stability condition.
\end{proof}

\begin{lemma}
If numerators of Poisson's ratios $n_1<0$ and $n_2<0$\,, then the stability conditions for orthotropic media do not allow $g_2^{\rm ort}>0$ and $g_3^{\rm ort}>0$\,.
\end{lemma}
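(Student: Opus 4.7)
The plan is to proceed by contradiction in exact parallel with the proof of Lemma 1, simply flipping the relevant inequalities. Assume that the stability conditions hold and that $g_2^{\rm ort}>0$ and $g_3^{\rm ort}>0$. Since $C_{33}\geq 0$ by the stability conditions~(\ref{cond:ort0}), the assumption is equivalent to $C_{13}>0$ and $C_{23}>0$. The hypotheses $n_1<0$ and $n_2<0$ read
\begin{equation}
C_{13}C_{22}<C_{12}C_{23}\qquad\text{and}\qquad C_{23}C_{11}<C_{12}C_{13}.
\end{equation}

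First I would pin down the sign of $C_{12}$. The left side of the first inequality is nonnegative ($C_{13}>0$, $C_{22}\geq 0$), so $C_{12}C_{23}>0$, and since $C_{23}>0$ this forces $C_{12}>0$. Hence I may divide through to obtain $C_{13}C_{22}/C_{12}<C_{23}$. Next, using this lower bound for $C_{23}$ inside the second inequality $C_{23}C_{11}<C_{12}C_{13}$, I chain
\begin{equation}
\frac{C_{11}C_{13}C_{22}}{C_{12}}<C_{23}C_{11}<C_{12}C_{13}.
\end{equation}
Since $C_{13}>0$ and $C_{12}>0$, I can cancel $C_{13}/C_{12}$ on both sides to conclude $C_{11}C_{22}<C_{12}^2$, which contradicts the stability condition $C_{11}C_{22}\geq C_{12}^2$ from~(\ref{cond:ort0}).

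Structurally this is essentially the mirror image of Lemma 1: wherever that proof used $C_{13}<0$, $C_{23}<0$, $n_1>0$, $n_2>0$, this one uses $C_{13}>0$, $C_{23}>0$, $n_1<0$, $n_2<0$, and the same dividing/substituting trick collapses to the same forbidden inequality on the principal $2\times2$ minor. I do not anticipate a genuine obstacle; the only point of care is verifying that all quantities one divides by are strictly positive, in particular $C_{12}>0$, which has to be deduced from $n_1<0$ rather than assumed. Once that sign is established, the rest is a two-line chain of inequalities terminating in the violation of~(\ref{cond:ort0}).
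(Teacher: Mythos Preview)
Your proof is correct and follows essentially the same route as the paper's: assume $C_{13}>0$, $C_{23}>0$, chain the two inequalities $n_1<0$ and $n_2<0$ into each other, and collapse to $C_{11}C_{22}<C_{12}^2$, contradicting~(\ref{cond:ort0}). The only cosmetic difference is that the paper isolates $C_{13}$ from $n_1<0$ by dividing through by $C_{22}$, whereas you isolate $C_{23}$ by dividing through by $C_{12}$; either substitution leads to the same forbidden minor. If anything, your version is slightly more careful in explicitly deducing $C_{12}>0$ before dividing by it, a step the paper glosses over.
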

\begin{proof}
Consider $n_1<0$\,, we get
\begin{equation}\label{ap:n1b}
\frac{C_{12}C_{23}}{C_{22}}>C_{13}\,.
\end{equation}
Also, consider $n_2<0$\,, namely,
\begin{equation}\label{ap:n1c}
C_{12}C_{13}>C_{23}C_{11}\,.
\end{equation}
Let us assume that
$g_2^{\rm ort}>0$ and $g_3^{\rm ort}>0$\,.
In such a case, both sides of inequalities~(\ref{ap:n1b}) and~(\ref{ap:n1c}) are positive. 
We can insert greater value than $C_{13}$ in the inequality~(\ref{ap:n1c}).
We obtain,
\begin{equation}
\frac{C_{12}^2C_{23}}{C_{22}}>C_{23}C_{11}\,.
\end{equation}
Since, $C_{23}>0$\,, we obtain
\begin{equation}
C_{12}^2>C_{11}C_{22}\,,
\end{equation}
which is not allowed by the stability condition.
\end{proof}
Similar strategy can be used to prove that if $n_1<0$ and $n_2>0$ then $g_2^{\rm ort}>0$ and $g_3^{\rm ort}<0$ are not allowed, or, conversely, if $n_1>0$ and $n_2<0$ then $g_2^{\rm ort}<0$ and $g_3^{\rm ort}>0$ are not allowed.

\end{document}